\DeclareSymbolFont{matha}{OML}{txmi}{m}{it}
\DeclareMathSymbol{\varv}{\mathord}{matha}{118}
\newcommand{\trans}{^{\mathsf{T}}}
\newcommand{\herm}{^{\H}}
\begin{document}
	\title{Ergodic Mutual Information and Outage Probability for SIM-Assisted Holographic MIMO Communications} 
	\author{Anastasios Papazafeiropoulos, Pandelis Kourtessis,  Dimitra I. Kaklamani, 			Iakovos S. Venieris \thanks{A. Papazafeiropoulos is with the Communications and Intelligent Systems Research Group, University of Hertfordshire, Hatfield AL10 9AB, U. K.  P. Kourtessis is with the Communications and Intelligent Systems Research Group, University of Hertfordshire, Hatfield AL10 9AB, U. K.   Dimitra I. Kaklamani is with the Microwave and Fiber Optics Laboratory, and Iakovos S. Venieris is  with the Intelligent Communications and Broadband Networks Laboratory, School of Electrical and Computer Engineering, National Technical University of Athens, Zografou, 15780 Athens,	Greece.	
			Corresponding author's email: tapapazaf@gmail.com.}}
	\maketitle\vspace{-1.7cm}
	\begin{abstract}	
		Stacked intelligent metasurface (SIM) is a promising enabler for next-generation high-capacity networks that exhibit better performance compared to its single-layer counterpart by means of just wave propagation. However, the study of ergodic mutual information (EMI) and outage probability for SIM-assisted multiple-input-multiple-output (MIMO) systems is not available in the literature. To this end, we obtain the distribution of the MI by using large random matrix theory (RMT) tools. Next, we derive a tight closed-form expression for the outage probability  based on statistical channel state information (CSI). Moreover, we apply the gradient descent method for the  minimization of the outage probability. 
		Simulation results verify  the analytical results and provide fundamental insights such as the performance enhancements compared to conventional MIMO systems and the single-layer counterpart. Notably the proposed optimization algorithm is faster than the alternating optimization (AO) benchmark by saving significant overhead.	
	\end{abstract}
	\begin{keywords}
		Reconfigurable intelligent surface 	(RIS), stacked intelligent metasurfaces (SIM),  gradient projection,  6G networks.
	\end{keywords}
	
	\section{Introduction}
	
	A promising technology to improve the performance of sixth-generation (6G) communication systems while offering the potential to shape the propagation environment is 	the reconfigurable intelligent surface (RIS) by means of the configuration of nearly passive reflective elements \cite{DiRenzo2020,Wu2020,Papazafeiropoulos2021,Papazafeiropoulos2023c}. A RIS is implemented by a large number of  elements having a low cost that can induce phase shifts to the impinging waves with diverse objectives towards better connectivity \cite{DiRenzo2020}. 
	
	Plenty of works have studied the performance analysis and design of RIS-assisted systems \cite{Wu2019,Bjoernson2019b,Yang2020b,Zhao2020,Papazafeiropoulos2021,Mu2021,Papazafeiropoulos2023,Papazafeiropoulos2023a, Papazafeiropoulos2023b}. In particular, the capacity and the outage probability are two significant performance metrics. On this ground, the ergodic mutual information (EMI) has been studied for RIS-assisted multiple-input-multiple-output (MIMO) systems over Rician channels through a tight and closed-form approximation \cite{Zhang2021a}. Regarding the outage probability, in the case of RIS-assisted single-input single-output (SISO) systems, it has been evaluated in several previous works, e.g., \cite{Salhab2021}. Similarly, in the case of multiple-input single-output (MISO) systems, an example is the derivation of the outage probability and its optimal expression with maximum-ratio transmission (MRT) in \cite{Guo2020}. 
	Previous methods concerning SISO and MISO systems cannot be applied to MIMO systems since the spectral distribution of the product of two random matrices is a challenging problem \cite{Zheng2016}. Note that, the outage probability of RIS-assisted MIMO has been investigated only in \cite{Shi2021} and \cite{Zhang2022}. In the former, the Mellin transform was exploited to obtain the outage probability over Rayleigh fading channels with channel correlation only on one side of the transceiver. In the latter, the statistics of the MI for RIS-assisted MIMO systems was derived and it was used to study the outage probability.
	
	Despite the suggestion of RIS for various communication scenarios due to its numerous advantages, the majority of existing works have focused on single-layer metasurface structures, which present limitations on the  beam management \cite{Guo2020a}. Also, the single-layer configuration of conventional single-layer RIS and hardware limitations cannot suppress inter-user interference. These gaps lead \textit{An et al}. to the concept of   stacked intelligent metasurface (SIM) \cite{An2023}, which comes with  remarkable advantages compared to a single-layer conventional RIS. In particular, a SIM-based transceiver for point-to-point MIMO communication was proposed, where two SIMs were positioned at the transmitter and the receiver with the electromagnetic waves (EM) propagating through them without the use of any digital hardware. Each SIM was optimized by means of an alternating optimization (AO) approach under instantaneous CSI conditions. On the contrary, in \cite{Papazafeiropoulos2024a}, a concurrent optimization of the phase shifts of both SIMs took place based on  hybrid digital-wave design. In this direction, the achievable rate of SIM-assisted systems was studied for multi-user MISO systems in \cite{Papazafeiropoulos2024,Papazafeiropoulos2024b} while  the  near field beamforming was investigated in    \cite{Papazafeiropoulos2024c}.
	
	\textit{Contributions}: The above observations were the motivation for this paper, where we derive the EMI and \textcolor{black}{outage} probability for large SIM-assisted holographic MIMO (HMIMO) systems based on statistical CSI. Moreover, we propose a gradient descent algorithm to minimise the outage probability by optimising simultaneously the phase shifts of the transceiver. The proposed approach is more advantageous compared to an instantaneous-based approach, which is crucial for the implementation of large SIM-assisted MIMO systems.  \textcolor{black}{Compared to \cite{Zhang2022}, our manuscript investigates the EMI and outage probability of SIM-assisted MIMO systems and highlights their performance differences compared to conventional RIS-assisted architectures. The proposed SIM-based system exhibits substantial distinctions from the RIS-based model in \cite{Zhang2022}, both in terms of architecture and mathematical modeling. Unlike RIS, which is typically deployed in the intermediate space between the transmitter and receiver, SIMs are positioned at both the transmitter and the receiver ends. Furthermore, our system incorporates two SIMs, whereas \cite{Zhang2022} considers a single RIS. From a structural perspective, SIMs are multilayer configurations that result in a non-diagonal overall transfer matrix, in contrast to RISs, which are typically single-layer structures with a diagonal phase shift matrix.} 	
	
\textcolor{black}{	Contary to \cite{An2023,Papazafeiropoulos2024a}, this work adopts a statistical CSI framework, where the optimization of phase shifts is performed based solely on long-term channel statistics (e.g., correlation matrices). This reduces signaling overhead and enhances scalability, making the method suitable for practical implementation. Furthermore, we propose a projected gradient descent algorithm that jointly optimizes both the transmitter and receiver SIMs, avoiding the limitations of sequential AO updates used in \cite{An2023}. To provide deeper theoretical insights, we develop a large system analysis using random matrix theory (RMT), which yields closed-form expressions for the EMI, outage probability, and finite-SNR diversity-multiplexing tradeoff DMT, which are performance metrics not addressed in prior SIM works. Our results establish fundamental trade-offs and demonstrate the effectiveness of SIM optimization under realistic statistical CSI assumptions.}
	
	The key points of this paper are summarized below. 
	\begin{itemize}
		\item 	We obtain the EMI for SIM-assisted MIMO systems over correlated channels based on statistical CSI. Next, we derive an approximation of the outage probability. Notably, the validation of the proposed method is presented by numerical results. 
		\item	We optimise the outage probability by means of a gradient descent algorithm optimising the phase shifts of the two SIMs simultaneously while assuming only statistical CSI, which saves significant overhead. 
		\item	We obtain the finite signal-to-noise ratio (SNR) \textcolor{black}{diversity-multiplexing tradeoff (DMT)} for large SIM-assisted MIMO systems, which depends on the mean and the variance of the MI. 
		\item	We present analytical and simulation results that coincide and shed light on the impact of the system parameters on the EMI and the outage probability.
	\end{itemize}
	
	\textit{Paper Outline}: Section~\ref{System} presents the system and channel models of a SIM-assisted MIMO system.   Section~\ref{mutualI} provides the EMI.  Section~\ref{outageProb}  presents the outage probability together with its optimization. Next, we provide the finite-SNR DMT.  In Section~\ref{Numerical}, we depict and elaborate on the numerical results,  and Section~\ref{Conclusion} concludes the paper.	
	\paragraph*{Notations}
	Bold uppercase and lowercase letters are used to denote matrices and vectors respectively. $\bX \trans$, $\bX \herm$,  and $\tr (\bX)$ denote the (ordinary) transpose, conjugate transpose, and trace of the matrix $\bX$. A square diagonal matrix whose main diagonal consists of the elements of vector $\bx$ is represented by $\diag (\bx)$.  The complex-valued gradient of $f(\cdot)$ with respect to (w.r.t.) $\bX^{\ast}$ is denoted by $\nabla_{\bX} f(\cdot)$.  The vector space of all complex-valued matrices of size $M \times N$ is denoted by $\mathbb C^{M \times N}$. The amplitude and phase of a complex number $x$ are denoted by $|x|$ and $\angle x$, respectively. The statistical expectation and variance operators are denoted by using $\mathbb E \{\cdot\}$ and $\mathrm{Var}\{\cdot\}$, respectively. 	$\Xi(x)$ is the cumulative 	distribution function (CDF) of standard Gaussian distribution 	and $Q(\cdot)$ is the $Q$-function, where $Q(x) = 1-\Xi(x)$.  $ \xrightarrow[N \to \infty]{\mathcal{P}}$, $ \xrightarrow[N \to \infty]{\mathcal{D}}$, and $ \xrightarrow[N \to \infty]{}$ 	denote the 	convergence in probability, the convergence in distribution, and 	the almost sure convergence, respectively.
	
	\section{\textcolor{black}{Architecture and Modeling of SIM-Based HMIMO}}\label{System}
	\subsection{System Model}
	\textcolor{black}{	Consider a SIM-aided point-to-point holographic MIMO (HMIMO) communication system, as illustrated in Fig. \ref{Fig01}. The transmitter is equipped with $N_t$ antennas and the receiver with $N_r$ antennas. For clarity, these antennas are not depicted in the figure, which focuses solely on the SIM implementation. }
	A smart controller adjusts the phase shift of the EM waves impinging on each meta-atom of each surface. 
	
	\begin{figure}
		\begin{center}
			\includegraphics[width=0.8\linewidth]{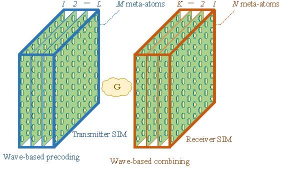}
			\caption{A SIM-assisted HMIMO system. }
			\label{Fig01}
		\end{center}
	\end{figure}
	
	Based on the SIM architecture  in \cite{An2023,Papazafeiropoulos2024a,Papazafeiropoulos2024b,Papazafeiropoulos2024}, let  $ L $ be the number of transmitter SIM layers
	and $ K $ be the number of receiver SIM layers. Also,  let $ M $ be the number of meta-atoms on each  layer at the transmitter SIM and let  $ N $ be the number of meta-atoms on each  layer at the  receiver SIM.  Let the corresponding sets be  $ \mathcal{L}=\{1,\ldots,L\} $,  $\mathcal{K}=\{1,\ldots,K\} $,  $ \mathcal{M}=\{1,\ldots,M\} $,  and $ \mathcal{N}=\{1,\ldots,N\} $, respectively.
	To this end, the phase shift by  atom  $ m $ on the  transmit  layer $ l $  is denoted by  $ \theta_{m}^{l}\in [0,2\pi), m \in \mathcal{M}, l \in \mathcal{L} $ be with $ \phi_{m}^{l} =e^{j \theta_{m}^{l}}$. The transmission coefficient matrix, associated with the $l$-th transmit layer, is given by $ \bPhi^{l}=\diag(\bphi^{l})\in \mathbb{C}^{M \times M} $, where $ \bphi^{l} =[\phi^{l}_{1}, \dots, \phi^{l}_{M}]^{\T}\in \mathbb{C}^{M \times 1}$. In a similar way,  the phase shift by meta-atom  $ n $ on the  receive metasurface layer  $ k $ is denoted by $ \xi_{n}^{k}\in [0,2\pi), n \in \mathcal{N}, k \in \mathcal{K} $  with $ \psi_{n}^{k}=e^{j \xi_{n}^{k}} $ being the corresponding transmission coefficient. The $k$-th receive  matrix is given by $ \bPsi^{k}=\diag(\bpsi^{k})\in \mathbb{C}^{M \times N} $ with $ \bpsi^{k} =[\psi^{k}_{1}, \dots, \psi^{k}_{M}]^{\T}\in \mathbb{C}^{N \times 1}$. Note that we assume constant modulus equal to $ 1 $ and continuously-adjustable phase shifts \cite{Wu2019,An2023,Papazafeiropoulos2024a}. Moreover, we rely on the design of both SIMs  described in \cite{An2023,Papazafeiropoulos2024}. Note that future work could address more realistic assumptions, e.g., the case of  coupled phase and magnitude \cite{Abeywickrama2020}.\footnote{\textcolor{black}{In this work, we adopt the commonly used unit-modulus assumption for the SIM elements, implying lossless transmission through each metasurface layer. While this simplifies the analysis and highlights the fundamental benefits of phase optimization, practical implementations may exhibit insertion losses and amplitude-phase coupling due to material and fabrication constraints. Incorporating such impairments is a promising direction for future research.}}

	\subsection{Channel Model}
	The  transmitter SIM is described as
	\begin{align}
		\bP=\bPhi^{L}\bW^{L}\cdots\bPhi^{2}\bW^{2}\bPhi^{1}\bW^{1}\in \mathbb{C}^{M \times N_{t}},\label{TransmitterSIM}
	\end{align}
	where $ \bW^{l}\in \mathbb{C}^{M \times M}, l \in \mathcal{L}/\{1\} $ is the transmission coefficient matrix between layers $ (l-1)$ and $ l $ while  $ \bW^{1} \in \mathbb{C}^{M \times  N_{t}} $ corresponds to the input  layer of the  SIM at the transmitter side. Note that the transmission coefficient between meta-atom  $ \tilde{m} $ and meta-atom  $ m $ is written according  to the  Rayleigh-Sommerfeld diffraction theory \cite{Lin2018,An2023}. 
%
	In a similar way, the  receiver SIM is characterized by
	\begin{align}
		\bD=\bU^{1}\bPsi^{1}\bU^{2}\bPsi^{2}\cdots\bU^{K}\bPsi^{K}\in \mathbb{C}^{N_{r} \times N},
	\end{align}
	where $ \bU^{k}\in \mathbb{C}^{N \times N}, k \in \mathcal{K}/\{1\} $ denotes the transmission coefficient matrix between the  receive  layers $ k $ and $ (k-1) $. Also, $ \bU^{1}\in \mathbb{C}^{N_{r} \times N}$ is the matrix from the output layer of the receiver SIM to the receive antenna array. Here, the  coefficient from  meta-atom
	$ n $ to   meta-atom $ \tilde{n} $  is written as in \cite{Lin2018,An2023} again.

	The HMIMO channel between the transmitter and receiver SIMs  is described by the $N\times M$ complex matrix \cite{Hu2022}
	\begin{align}
		\bG=\bR^{1/2}_{\mathrm{R}}\tilde{\bG}\bR^{1/2}_{\mathrm{T}},\label{channel}
	\end{align}
	where $ \bR_{\mathrm{R}}\in \mathbb{C}^{N\times N} $  and $\bR_{\mathrm{T}}\in \mathbb{C}^{M\times M} $ are spatial correlation matrices at the receiver and transmitter SIMs, respectively.\footnote{\textcolor{black}{Future work could focus on extending the current framework to support more general channel models with non-separable spatial correlation structures \cite{Zhang2025}, which more accurately represents holographic MIMO scenarios. Incorporating such models into the analysis and optimization of SIM-assisted systems will provide deeper insights into their performance under realistic electromagnetic coupling conditions.}} Specifically, under the assumptions of far-field propagation and isotropic scattering \cite{Pizzo2020,Dai2020}, the spatial correlation matrices at the   transmitter  and receiver SIMs are described by \cite{Demir2022} as
	\begin{align}
				[\bR_{\mathrm{T}}]_{m,\tilde{m}}&=\mathrm{sinc}(2 r_{m,\tilde{m}}/\lambda),  m\in \mathcal{M}, \tilde{m}\in \mathcal{M},\label{t}\\
		[\bR_{\mathrm{R}}]_{\tilde{n},n}&=\mathrm{sinc}(2 t_{\tilde{n},n}/\lambda),  \tilde{n}\in \mathcal{N}, n\in \mathcal{N},\label{c}
	\end{align}
	respectively. Moreover, $ \tilde{\bG}\sim \mathcal{CN}(\b0,\frac{\beta}{M}\Id_{N}\otimes \Id_{M})\in \mathbb{C}^{N\times M} $ expresses the independent  Rayleigh fading channel,  	Also,  $\beta $ expresses the average path loss between the two  SIMs, which is written as 
	given by \cite{Rappaport2015}
	\begin{align}
		\beta(d)=\beta(d_{0})+10 b \log_{10}\left(\frac{d}{d_{0}}\right)+X_{\delta},~d \ge d_{0},
	\end{align}
	where  $ b $ is the path loss exponent,  $ X_{\delta}\sim \mathcal{CN}(0, \delta) $ with  $ \delta $  depending on shadow fading, and $ 
	\beta(d_{0})=20 \log_{10}(4 \pi d_{0}/\lambda)~\mathrm{dB}$ is the free space path loss at a reference distance $ d_{0} $. Notably, under these settings, the total $ \bH \in \mathbb{C}^{N_{r} \times N_{t}} $  is modeled as
	\begin{align}
		\bH=\bD\bG\bP.\label{EquivalentChannel}
	\end{align}
	\begin{remark}
		Herein, we assume statistical CSI, which means that the correlation matrices are available. These correlation matrices can be obtained based on the techniques \cite{Liang2001,Chen2010} after estimating the channel between the transmitter and the receiver according to the methods proposed in \cite{Liu2020a,Hu2021}.
	\end{remark}

	To this end, the received signal $\by \in \mathbb{C}^{N_{r}}$ is given by 
	\begin{align}
		\by=\bH\bs +\bn,
	\end{align}
	where 	$\bs \in \mathbb{C}^{N_{t}}$ is the transmitted signal with $\EE\{|s_{i}|^{2}\}=1, i=1,\ldots,N_{t}$, i.e., unit average transmit power, and $\bn\in \mathbb{C}^{N_{r}}\sim \mathcal{CN}(\b0, \sigma^{2}\Id_{N_{r}})$ the additive white 	Gaussian noise  with variance $\sigma^{2} $.
	\subsection{\textcolor{black}{Comparison between SIM and RIS in Holographic MIMO}}
	\textcolor{black}{The key differences between the new holographic MIMO architecture implemented using SIMs and the traditional RIS-based holographic MIMO lie in their system architecture, signal processing capabilities, and electromagnetic characteristics.}
	
	\textcolor{black}{	From a deployment perspective, traditional RIS-based holographic MIMO systems utilize a single passive metasurface placed within the propagation environment (e.g., mounted on a wall). This metasurface reflects or re-radiates incoming signals toward the receiver by applying programmable phase shifts. In contrast, the proposed SIM-based holographic MIMO employs two metasurfaces, one integrated at the transmitter and one at the receiver, both of which actively participate in the signal transmission and reception process, rather than passively reflecting signals.
	}
	
	\textcolor{black}{	Structurally and mathematically, RIS-based systems are typically modeled using diagonal phase shift matrices, as the RIS comprises a single-layer passive structure with independent phase control per element. On the other hand, SIMs consist of multilayer metasurfaces with significant inter-element electromagnetic interactions. This leads to a non-diagonal transformation matrix that captures mutual coupling and layered wave interactions, enabling richer and more complex propagation behavior.
	}
	
	\textcolor{black}{	In terms of functionality and control, RISs provide passive beamforming with limited control, typically phase-only or amplitude-phase optimization, and lack intrinsic signal processing capabilities. SIMs, however, offer more advanced control over the electromagnetic wavefronts, potentially supporting scattering, modulation, and spatial encoding. This enables a more comprehensive manipulation of wave propagation through engineered electromagnetic responses.
	}	
	
	\textcolor{black}{	Regarding implementation and application, RIS-based holographic MIMO is primarily suited for passive wavefront shaping and is commonly used for coverage enhancement or line-of-sight assistance. In contrast, SIM-based holographic MIMO enables direct integration with the transceiver’s RF chain and supports transmission- and reception-side holographic beamforming. This approach enhances spatial resolution, increases channel capacity, and opens the possibility for full-aperture communication.}
	
	\section{\textcolor{black}{EMI Analysis Using Large RMT}}\label{mutualI}
	To support the presentation of the main results, we provide certain preliminary results based on large random matrix theory, which is efficient for the analysis of large MIMO systems without requiring lengthy Monte Carlo simulations \cite{Couillet2011a, Couillet2011, Papazafeiropoulos2015a}. These results rely on the  assumptions below.
	
	\begin{assumption}\label{as1}
		$0<\lim\inf\limits_{N_{t}\ge 1} \frac{N_{t}}{M}\le  \frac{N_{t}}{M}\le \lim\sup\limits_{N_{t}\ge 1} \frac{N_{t}}{M}< \infty$, 	$0<\lim\inf\limits_{N\ge 1} \frac{N}{N_{r}}\le  \frac{N}{N_{r}}\le \lim\sup\limits_{N\ge 1} \frac{N}{N_{r}}< \infty$, $0<\lim\inf\limits_{N_{t}\ge 1} \frac{N_{t}}{N_{R}}\le  \frac{N_{t}}{N_{r}}\le \lim\sup\limits_{N_{t}\ge 1} \frac{N_{t}}{N_{r}}< \infty$.
	\end{assumption}
	\begin{assumption}\label{as2}
		$\lim\sup\limits_{N_{R}\ge 1}\|\bR_{\mathrm{R}}\|< \infty$, $
		\lim\sup\limits_{N_{t}\ge 1}\|\bR_{\mathrm{T}}\|< \infty$.
	\end{assumption}
	\begin{assumption}\label{as3}
		$\lim\inf\limits_{M\ge 1}\frac{1}{M}\tr(\bR_{\mathrm{T}})>0$, 	$\lim\inf\limits_{N\ge 1}\frac{1}{N}\tr(\bR_{\mathrm{R}})>0$.
	\end{assumption}
	Regarding the assumptions, Assumption \ref{as1}	defines the asymptotic regime, where $N_{t}, N_{r}$, $ M$, and $ N$ go to infinity with the same ratios. Assumptions \ref{as2} and \ref{as3} consider a restriction on the rank of the correlation matrices, i.e., the corresponding ranks  increase with the number of the SIM elements.
	\subsection{Mutual Information}
	The MI of the  SIM-aided point-to-point HMIMO communication system is given by
	\begin{align}
		C(\rho)=\log\det(\Id_{N_{r}}+\rho\bH\bH^{\H}),\label{MI}
	\end{align}
	where $\rho=\frac{P}{N_{t}\sigma^{2}}$ with $P$ expressing the  total transmit power.
	
	By plugging \eqref{EquivalentChannel} and \eqref{channel} into \eqref{MI}, we obtain
	\begin{align}
		C(\rho)=\log\det(\Id_{N_{r}}+\rho\bD\bR^{1/2}_{\mathrm{R}}\tilde{\bG}\bR^{1/2}_{\mathrm{T}}\bP\bP^{\H}\bR^{1/2}_{\mathrm{T}}\tilde{\bG}^{\H}\bR^{1/2}_{\mathrm{R}}\bD^{\H}).\label{MI1}
	\end{align}
	
	From \eqref{MI1}, it can be seen that the equivalent channel matrix can be written as
	\begin{align}
		\bar{\bH}=\bD\bR^{1/2}_{\mathrm{R}}\tilde{\bG}\bR^{1/2}_{\mathrm{T}}\bP\label{equivalent channel}.
	\end{align} 
	
	A tight approximation of the MI as $N_{r}\to \infty$ follows.
	\begin{theorem}[Couillet2011a]\label{Th1}
		Given the Assumptions \ref{as1}-\ref{as3}, the average throughput $\EE\{C(\rho)\}$  obeys to
		\begin{align}
			\frac{1}{N_{r}}\EE\{C(\rho)\}\xrightarrow{N_{r}\to \infty}\frac{1}{N_{r}}\bar{C}(\rho). 
		\end{align}
		In our case, where $\tilde{\bG}$ is Gaussian, it holds that
		\begin{align}
			\EE\{C(\rho)\}\xrightarrow{N_{r}\to \infty}\bar{C}(\rho), 
		\end{align}
		where $\bar{C}(\rho)$ is given by
		\begin{align}
			\bar{C}(\rho)&=\log \det (\Id_{N_{t}}+
			\frac{N_{r}}{N_{t}} e(-\rho)\bP^{\H}\bR_{\mathrm{T}}\bP)\nn\\
			&+\log\det(\Id_{N_{r}}+\delta(-\rho)\bD\bR_{\mathrm{R}}\bD^{
				\H})-\rho \delta (-\rho) e(-\rho)\label{MeanMI}.
		\end{align}
		
		Note that $(\delta(\rho),e(\rho)) $ forms  the unique positive solution of the following
		system of equations
		\begin{align}	
			\!	\!\!\!e(\rho)=\frac{1}{N_{r}}\tr \left(\bD\bR_{\mathrm{R}}\bD^{
				\H}\bQ_{r}\right), 
			\delta(\rho)=\frac{1}{N_{t}}\tr\left(\bP^{\H}\bR_{\mathrm{T}}\bP \bQ_{t}\right),\label{solution}
		\end{align}
		where
		\begin{align}
			\bQ_{r}&=\left(-\rho\Big[\Id_{N_{r}}+	\delta(\rho)\bD\bR_{\mathrm{R}}\bD^{
				\H}\Big]\right)^{-1},\nn\\ 
			\bQ_{t}&=\left(-\rho\bigg[\Id_{N_{t}}+		\frac{N_{r}}{N_{t}}	e(\rho)\bP^{\H}\bR_{\mathrm{T}}\bP\bigg]\right)^{-1}.
		\end{align}
	\end{theorem}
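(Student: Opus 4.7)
The plan is to reduce the expression in \eqref{MI1} to the canonical doubly-correlated (Kronecker) Rayleigh MIMO form and then invoke the deterministic-equivalent machinery of Hachem--Loubaton--Najim, as summarized in \cite{Couillet2011a}. Starting from \eqref{equivalent channel}, I write the equivalent channel as $\bar{\bH}=\bA\tilde{\bG}\bB$, with the effective receive- and transmit-side factors
\begin{align}
\bA \triangleq \bD\bR^{1/2}_{\mathrm{R}}\in\mathbb{C}^{N_{r}\times N},\qquad \bB\triangleq \bR^{1/2}_{\mathrm{T}}\bP\in\mathbb{C}^{M\times N_{t}}.
\end{align}
Since $\tilde{\bG}$ has i.i.d.\ $\mathcal{CN}(0,\beta/M)$ entries, absorbing the $\beta/M$ factor into the effective correlations yields the standard Kronecker channel, whose deterministic equivalent takes exactly the two--log--det--minus--coupling shape in \eqref{MeanMI}. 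This places the problem verbatim in the setting of the deterministic-equivalent theorem for separately-correlated Gaussian matrices.

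The steps I would follow are the following. First, I verify that Assumptions~\ref{as1}--\ref{as3}, which are stated on $\bR_{\mathrm{T}}$ and $\bR_{\mathrm{R}}$, transfer to the effective correlation matrices $\tilde{\bR}_{\mathrm{R}}=\bA\bA^{\H}=\bD\bR_{\mathrm{R}}\bD^{\H}$ and $\tilde{\bR}_{\mathrm{T}}=\bB^{\H}\bB=\bP^{\H}\bR_{\mathrm{T}}\bP$. Bounded spectral norm follows from submultiplicativity together with the unit-modulus property of $\bPhi^{l}$ and $\bPsi^{k}$ and the boundedness of the Rayleigh--Sommerfeld propagation kernels $\bW^{l}$, $\bU^{k}$, ensuring $\|\bD\|,\|\bP\|=\mathcal{O}(1)$. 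The positive lower bound on the normalized traces follows analogously from Assumption~\ref{as3}. Second, I apply the Stieltjes-transform characterization of the spectral distribution of $\bar{\bH}\bar{\bH}^{\H}$: the resolvent $\bQ(\rho)=(\rho\bar{\bH}\bar{\bH}^{\H}+\Id_{N_{r}})^{-1}$ admits a deterministic equivalent $\bT(\rho)$ such that $\tfrac{1}{N_{r}}\tr(\bQ-\bT)\xrightarrow{\mathrm{a.s.}}0$, with $\bT$ parameterized by the scalar fixed-point pair $(e(\rho),\delta(\rho))$ in \eqref{solution}. Standard contraction arguments on a Standard Interference Function show that this pair is the unique positive solution. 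Third, integrating the identity $\tfrac{d}{d\rho}\log\det(\Id+\rho\bar{\bH}\bar{\bH}^{\H})=\tr(\bar{\bH}\bar{\bH}^{\H}\bQ(\rho))$ from $0$ to $\rho$, substituting the deterministic equivalent, and recognizing the primitive of the coupled fixed-point functionals produces exactly the two log-determinant terms in \eqref{MeanMI} plus the coupling correction $-\rho\delta(-\rho)e(-\rho)$.

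The convergence of the expectation $\EE\{C(\rho)\}-\bar{C}(\rho)\to 0$ (rather than its per-antenna normalization) in the Gaussian case follows from the sharper $\mathcal{O}(1/N_{r})$ bias bound in \cite{Couillet2011a}, exploiting the Nash--Poincar\'e inequality and the Gaussianity of $\tilde{\bG}$ to show that the variance of $C(\rho)$ is $\mathcal{O}(1)$ while the bias between $\EE\{C(\rho)\}$ and $\bar{C}(\rho)$ vanishes.

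\textbf{Main obstacle.} The non-trivial step is not the invocation of the deterministic equivalent, which is a black-box result once the model is identified, but the verification that the multilayer SIM matrices $\bD$ and $\bP$ preserve Assumptions~\ref{as1}--\ref{as3} on the effective correlations. In particular, since $\bD=\bU^{1}\bPsi^{1}\cdots\bU^{K}\bPsi^{K}$ is a product of $K$ potentially ill-conditioned propagation kernels interleaved with unit-modulus diagonal matrices, one must bound $\|\bD\|$ uniformly in $N$ and show that $\tfrac{1}{N_{r}}\tr(\bD\bR_{\mathrm{R}}\bD^{\H})$ stays bounded away from zero. This requires mild spectral control on the Rayleigh--Sommerfeld kernels, which holds under the sub-wavelength spacing assumption adopted throughout the paper but should be stated explicitly to make the application of Theorem~\ref{Th1} rigorous.
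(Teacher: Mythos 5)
Your proposal is correct and follows essentially the same route as the paper, which provides no explicit proof and simply invokes the deterministic-equivalent result of [Couillet2011a] for the Kronecker-type channel $\bar{\bH}=\bD\bR_{\mathrm{R}}^{1/2}\tilde{\bG}\bR_{\mathrm{T}}^{1/2}\bP$ with effective correlation matrices $\bD\bR_{\mathrm{R}}\bD^{\H}$ and $\bP^{\H}\bR_{\mathrm{T}}\bP$. Your observation that Assumptions~\ref{as1}--\ref{as3} are stated on $\bR_{\mathrm{T}},\bR_{\mathrm{R}}$ but must actually be verified for these effective matrices through the multilayer products $\bD$ and $\bP$ is a genuine point the paper glosses over, and making that verification explicit would strengthen the application of the theorem.
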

	
	\begin{remark}
		The tight approximation, described by Theorem \ref{Th1}, holds for any random matrix $\tilde{\bG}$, which is not necessarily Gaussian. Generally, $	\bar{C}(\rho)$ has to be normalized by $N_{r}$, but in the case of Gaussianity, we can omit this normalization.  The solution of \eqref{solution} depends on the phase shifts of the two SIMs through the matrices $\bPhi^{l}$ and $\bPsi^{k}$, respectively.
	\end{remark}
	
	The study of the fluctuation of the MI follows by presenting the corresponding central limit theorem (CLT).
	
	\subsubsection{The CLT for the MI}
	
	Regarding the Gaussianity of the MI, we present the following theorem, which is the CLT corresponding to the proposed system model.
	\begin{theorem}\label{th2}
		With Assumptions \ref{as1}-\ref{as3} satisfied, the CLT for SIM-assisted  HMIMO systems is written as
		\begin{align}
			\frac{{C}(\rho)-\bar{C}(\rho)}{\sqrt{V(\rho)}}\xrightarrow[N_{r}\to \infty]{\mathcal{D}}\mathcal{N}(0,1),\label{CLT}
		\end{align}
		where $ \bar{C}(\rho)$ is given by \eqref{MeanMI} while the variance $V(\rho)$ at the asymptotic limit is written as
		\begin{align}
			V(\rho)=-\log(1-\gamma\tilde{\gamma})
		\end{align}
		with  $\gamma$ and $\tilde{\gamma}$ given in Appendix~\ref{th2proof}.
	\end{theorem}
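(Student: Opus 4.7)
The plan is to recast the problem in the classical \emph{separable} (Kronecker) correlation framework of large random matrix theory and then invoke the CLT machinery developed for this model, e.g., in the Couillet--Debbah monograph and the Hachem--Loubaton--Najim line of work. Writing $\bar{\bH}=\bA\tilde{\bG}\bB$ with $\bA\bydef\bD\bR_{\mathrm{R}}^{1/2}$ and $\bB\bydef\bR_{\mathrm{T}}^{1/2}\bP$, the effective receive and transmit correlations are $\tilde{\bR}_{R}\bydef\bA\bA^{\H}=\bD\bR_{\mathrm{R}}\bD^{\H}$ and $\tilde{\bR}_{T}\bydef\bB^{\H}\bB=\bP^{\H}\bR_{\mathrm{T}}\bP$. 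My first step is to verify that Assumptions~\ref{as1}--\ref{as3}, together with the unit-modulus property of the phase shifts $\phi_{m}^{l},\psi_{n}^{k}$ and the boundedness of the Rayleigh--Sommerfeld propagation matrices (which yield $\|\bP\|,\|\bD\|=O(1)$), transfer to $\tilde{\bR}_{R},\tilde{\bR}_{T}$ the usual RMT hypotheses of uniformly bounded spectral norm and asymptotically non-vanishing normalized trace, so that the fluctuation theory for doubly-correlated Gram matrices applies verbatim.

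Once the problem lives in the standard setting, the second step is a resolvent computation. I would express the centered MI through an SNR integral of the fluctuation of $\tr[(\Id_{N_{r}}+u\bar{\bH}\bar{\bH}^{\H})^{-1}]$ around its deterministic equivalent and then use the Poincar\'e--Nash (Gaussian integration by parts) inequality to compute its second-order behaviour. The cross-coupling between receive and transmit sides enters through a $2\times 2$ linear system whose unknowns are the centered traces $\tfrac{1}{N_{r}}\tr(\tilde{\bR}_{R}\bQ_{r})-e(-u)$ and $\tfrac{1}{N_{t}}\tr(\tilde{\bR}_{T}\bQ_{t})-\delta(-u)$ driven by a cross-variance kernel. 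Solving this system produces the two scalars $\gamma,\tilde{\gamma}$ of Appendix~\ref{th2proof}---essentially normalized traces of $(\tilde{\bR}_{R}\bQ_{r})^{2}$ and $(\tilde{\bR}_{T}\bQ_{t})^{2}$ evaluated at the deterministic equivalents---and the logarithmic shape $V(\rho)=-\log(1-\gamma\tilde{\gamma})$ emerges after integrating the resulting rational variance density along the SNR path from $0$ to $\rho$, a manipulation standard for separable Gaussian models.

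To upgrade the convergence of $[C(\rho)-\bar{C}(\rho)]/\sqrt{V(\rho)}$ to a standard normal, I would build a martingale difference sequence by revealing the columns of $\tilde{\bG}$ one at a time and invoke the martingale CLT of Billingsley. The conditional-variance convergence to $V(\rho)$ is essentially a rerun of the resolvent computation above, while the Lindeberg condition follows from standard $L^{4}$ concentration bounds for smooth linear spectral statistics of complex Gaussian matrices (e.g., via the Burkholder inequality combined with the rank-one perturbation identity for $\bQ_{r}$).

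The principal obstacle I anticipate is not the CLT technology itself, which is mature for separable correlation, but the bookkeeping required to guarantee that the multilayer SIM product structure of $\bP$ and $\bD$ yields effective correlations $\tilde{\bR}_{R},\tilde{\bR}_{T}$ satisfying Assumptions~\ref{as2}--\ref{as3} \emph{uniformly} in the phase configuration; this uniformity is what later legitimizes the gradient-descent optimization of Section~\ref{outageProb}. A secondary technical step is the explicit $2\times 2$ resolution that extracts $\gamma$ and $\tilde{\gamma}$ in closed form through the fixed-point system \eqref{solution}, which is precisely what Appendix~\ref{th2proof} carries out.
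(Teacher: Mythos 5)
Your proposal takes essentially the same route as the paper: rewrite $\bar{\bH}$ as a doubly-correlated Gaussian matrix with effective Kronecker correlations $\bD\bR_{\mathrm{R}}\bD^{\H}$ and $\bP^{\H}\bR_{\mathrm{T}}\bP$, check that the standard RMT hypotheses transfer to these effective matrices, and then appeal to the CLT for the separable model --- the paper disposes of this last step by directly citing Theorem~2 of Hachem et al.\ (2008), whereas you sketch the Poincar\'e--Nash plus martingale-CLT derivation that underlies that reference. Your identification of $\gamma,\tilde{\gamma}$ as normalized squared traces of the effective correlations against the deterministic-equivalent resolvents, and of $V(\rho)=-\log(1-\gamma\tilde{\gamma})$, matches what Appendix~\ref{th2proof} records.
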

	\begin{proof}
		Please see Appendix~\ref{th2proof}.	
	\end{proof}
	
	\subsection{Special Case}
	In this subsection, we will consider the optimal transmission, which assumes a limited number of data streams	$S$. Also, to reveal the impact of the sizes of the SIMs, we assume independent Rayleigh channels, which means that $\bR_{\mathrm{T}}=\Id_{M}$ and $\bR_{\mathrm{R}}=\Id_{N}$. In particular, for the optimal transmission, we apply the truncated singular value decomposition (SVD) method to $\bG$. To this end, we have $\bG=\bE\bLambda\bF^{\H}$, where $ \bLambda=\diag(\lambda_{1},\ldots, \lambda_{\min(M,N)})$, where $\lambda_{1}\ge\ldots,\ge \lambda_{\min(M,N)}$ are the singular values in non-increasing order. Moreover, by spreading the data streams in terms of a transmit precoder $\bP=\bF_{:,1:S} \in \mathbb{C}^{M\times S}$ and collecting the spatial
	signals using a receive combiner $\bD=\bE_{:,1:S} \in \mathbb{C}^{S\times N}$. In such case, we have $\bP\bP^{\H}=\Id_{M}$ and $\bD^{\H}\bD=\Id_{N}$ since they are unitary matrices.
	
	\begin{proposition}
		Given Assumption \ref{as1}, the average throughput $\EE\{C(\rho)\}$ of SIM-assisted MIMO	systems over independent channels with $ N_{r}=N_{t}$  is obtained as
		\begin{align}
			\bar{C}(\rho)&=M \log  (1+
			\frac{1}{\rho(1+\delta(\rho))} )+N\log(1+\delta(\rho))\nn\\
			&-\frac{ \delta (\rho)}{1+\delta(\rho)} \label{MeanMIspecial},
		\end{align}
		where   $\delta$ is given by
		\begin{align}
			\delta=\frac{-\rho-2+\sqrt{\rho^{2}+8\rho+4}}{2 \rho}
		\end{align}
		with $\delta>0$.
	\end{proposition}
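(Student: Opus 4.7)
The plan is to specialize Theorem~\ref{Th1} to the stated regime and then reduce the coupled fixed-point system \eqref{solution} to a single scalar equation that can be solved in closed form.

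First, I would substitute $\bR_{\mathrm{T}}=\Id_M$ and $\bR_{\mathrm{R}}=\Id_N$ into \eqref{MeanMI} and \eqref{solution}. The truncated-SVD construction gives the unitary-type relations $\bP\bP^{\H}=\Id_M$ and $\bD^{\H}\bD=\Id_N$, so $\bP^{\H}\bR_{\mathrm{T}}\bP=\bP^{\H}\bP$ and $\bD\bR_{\mathrm{R}}\bD^{\H}=\bD\bD^{\H}$ are orthogonal projectors of rank $M$ and $N$, respectively. Applying Sylvester's identity $\det(\Id+\bA\bB)=\det(\Id+\bB\bA)$ to each of the two log-determinants in \eqref{MeanMI}, and using $N_r=N_t$ to eliminate the ratio $N_r/N_t$, the determinantal terms collapse to the scalar logs $M\log(1+e(-\rho))$ and $N\log(1+\delta(-\rho))$, explaining the coefficients $M$ and $N$ that appear in \eqref{MeanMIspecial}.

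Next, I would compute the two traces in \eqref{solution} explicitly. Because $\bP^{\H}\bP$ has $M$ eigenvalues equal to $1$ and the rest equal to $0$ (and similarly for $\bD\bD^{\H}$ with $N$), the resolvents $\bQ_t(-\rho)$ and $\bQ_r(-\rho)$ are diagonalizable in the same eigenbasis. Taking traces yields two coupled scalar identities tying $e(-\rho)$ and $\delta(-\rho)$ together. Eliminating $e(-\rho)$ via the relation $e(-\rho)=1/[\rho(1+\delta(-\rho))]$ reduces the system to a single quadratic equation in $\delta$; I would then select the positive root dictated by the physical constraint $\delta>0$, which produces the closed-form expression stated in the proposition.

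Finally, plugging the identified $e(-\rho)$ back into the last term of \eqref{MeanMI} rewrites the cross term as $\rho\delta(-\rho)e(-\rho)=\delta/(1+\delta)$, which, together with the two collapsed log-det pieces, assembles into \eqref{MeanMIspecial}. The main obstacle is bookkeeping: carefully tracking how the dimensional ratios $M/N_t$ and $N/N_r$ enter the trace computations once the projector structure is used, and keeping the sign conventions consistent under the $\rho\mapsto-\rho$ substitution that moves from the resolvent variable in Theorem~\ref{Th1} to the SNR parameter in the proposition. Once those constants are correctly tracked, the coefficients of the resulting quadratic match the claimed form, and the proof is essentially algebraic.
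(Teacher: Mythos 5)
Your proposal follows essentially the same route as the paper's (very terse) proof: substitute $\bR_{\mathrm{T}}=\Id_{M}$, $\bR_{\mathrm{R}}=\Id_{N}$ and the unitary relations for $\bP$ and $\bD$ into Theorem~\ref{Th1}, collapse the log-determinants via $\det(\Id+\bA\bB)=\det(\Id+\bB\bA)$, and eliminate $e$ from the fixed-point system \eqref{solution} to obtain a single quadratic in $\delta$ (the paper's $\rho\delta^{2}+(\rho+2)\delta-1=0$) whose positive root gives the stated closed form. Your write-up simply makes explicit the steps the paper leaves implicit, including the relation $e(-\rho)=1/[\rho(1+\delta)]$ and the caution about tracking the dimensional ratios, so it is the same proof in more detail.
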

	\begin{proof}
		The proof is straightforward by substituting $\bR_{\mathrm{T}}=\Id_{M}$, $\bR_{\mathrm{R}}=\Id_{N}$, $\bP\bP^{\H}=\Id_{M}$, and $\bD^{\H}\bD=\Id_{N}$, while the property $\det(\Id+\bA\bB)=\det(\Id+\bB\bA)$ is applied. Note that  $\delta$ is  obtained by the 	equation
		\begin{align}
			\rho\delta^{2}+(\rho+2)\delta-1=0.
		\end{align}
	\end{proof}
	
	As can be seen, $\bar{C}(\rho)$  now depends only on one parameter, which is $\delta$. In this case, $\delta$ is the solution of a quadratic equation, which is related to the SNR $\rho$. Also, the performance depends on $M$, $N$, which are the sizes of the surfaces of the two SIMs.
	
	\begin{proposition}
		Let Assumption \ref{as1} and $ N_{r}=N_{t}$, the CLT for the MI of SIM-assisted MIMO	systems over independent channels with  is obtained by \eqref{CLT},
		where   the variance is written as
		\begin{align}
			V(z)=-\log(1-\frac{1}{1+\rho f}\frac{1}{z+\tilde{f}}),
		\end{align}
		where $f= \frac{1}{1+\rho\tilde{f}}$, and $\tilde{f}=\frac{1}{1+\rho f}$.
	\end{proposition}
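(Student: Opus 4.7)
The plan is to specialize Theorem~\ref{th2} by substituting $\bR_{\mathrm{T}} = \Id_{M}$, $\bR_{\mathrm{R}} = \Id_{N}$, $\bP\bP^{\H} = \Id_{M}$, and $\bD^{\H}\bD = \Id_{N}$ with $N_{r} = N_{t}$ into the variance formula $V(\rho) = -\log(1 - \gamma\tilde{\gamma})$ and into the auxiliary quantities $\gamma$, $\tilde{\gamma}$ defined in Appendix~\ref{th2proof}. The argument parallels the proof of the preceding proposition on the mean, which also proceeded by direct substitution, so no new limiting or concentration step is required; the CLT in \eqref{CLT} transfers verbatim once the variance is simplified.

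First, I would apply the isotropic substitutions to the fixed-point system \eqref{solution}. With all correlation and SIM-equivalent matrices replaced by identities, both resolvents $\bQ_{t}$ and $\bQ_{r}$ become scalar multiples of the identity, so the two coupled trace equations collapse to a pair of scalar coupled equations in two unknowns. Matching these against the paper's notation yields precisely the claimed relations $f = (1+\rho\tilde{f})^{-1}$ and $\tilde{f} = (1+\rho f)^{-1}$, which in the symmetric case $N_{r}=N_{t}$ are consistent with the $\delta$ obtained in the mean proposition (via $\delta = \rho f \tilde f$ after rescaling).

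Next, I would insert the same substitutions into the definitions of $\gamma$ and $\tilde{\gamma}$. In the general formulation these arise as normalized traces of products of two resolvents with the correlation and SIM matrices; under the reduction each trace becomes a scalar constant times $N_{r}$ (or $N_{t}$), so the normalization cancels and $\gamma\tilde{\gamma}$ factors into the announced scalar product $\frac{1}{1+\rho f}\cdot\frac{1}{z+\tilde f}$, one factor coming from the transmit-side resolvent and the other from the receive-side resolvent. Substitution into $V(\rho)=-\log(1-\gamma\tilde\gamma)$ then gives the stated $V(z)$.

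The part I expect to be delicate is matching the sign and argument conventions between the Stieltjes-transform variable $z$ used in Appendix~\ref{th2proof} and the SNR variable $\rho$ appearing elsewhere in the statement, since the proposition retains both symbols simultaneously; in particular, one has to be careful with the $-\rho$ that appears inside $\bQ_{r}$ and $\bQ_{t}$ versus the positive $\rho$ that appears in the simplified fixed-point equations for $f,\tilde f$. Once this bookkeeping is settled, the identification is immediate and all remaining manipulations are routine scalar algebra using $\tr(\Id_{N})=N$ and the fact that $\bD\bD^{\H}$ acts as the identity on the relevant subspace.
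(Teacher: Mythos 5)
Your proposal matches the paper's proof, which is exactly the one-line substitution argument you describe: plug $\bR_{\mathrm{T}}=\Id_{M}$, $\bR_{\mathrm{R}}=\Id_{N}$, $\bP\bP^{\H}=\Id_{M}$, $\bD^{\H}\bD=\Id_{N}$ into the quantities $\gamma$, $\tilde{\gamma}$, $f$, $\tilde{f}$, $\bS$, $\tilde{\bS}$ from Appendix~\ref{th2proof} so that the resolvents become scalar multiples of the identity and the fixed-point system collapses to the stated scalar relations. Your added care about the $z$ versus $\rho$ bookkeeping is a reasonable (and warranted) elaboration, but the route is the same.
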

	\begin{proof}
		The proof is straightforward by substituting $\bR_{\mathrm{T}}=\Id_{M}$, $\bR_{\mathrm{R}}=\Id_{N}$, $\bP\bP^{\H}=\Id_{M}$, and $\bD^{\H}\bD=\Id_{N}$.
	\end{proof}
	
	\section{\textcolor{black}{Outage Analysis and Joint SIM Optimization}}\label{outageProb}
	The following theorem presents the approximated outage probability, which is obtained as
	\begin{align}
		P_{\mathrm{out}}(R)=\mathbb{P}({C}(\rho)<R).
	\end{align}
	\begin{theorem}
		Let a transmission rate $R$, the outage probability of SIM-assisted HMIMO systems can be approximated in closed form as
		\begin{align}
			P_{\mathrm{out}}(R)\approx \Xi \left(\frac{R-\bar{C}(\rho)}{\sqrt{V(\rho)}}\right)\label{outage}\!.
		\end{align}
	\end{theorem}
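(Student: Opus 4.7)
The plan is to obtain the outage expression as a direct consequence of the CLT established in Theorem~\ref{th2}. First I would start from the definition $P_{\mathrm{out}}(R)=\mathbb{P}(C(\rho)<R)$ and standardise the random variable $C(\rho)$ by subtracting the deterministic equivalent $\bar{C}(\rho)$ and dividing by $\sqrt{V(\rho)}$. This yields
\begin{align}
P_{\mathrm{out}}(R)=\mathbb{P}\!\left(\frac{C(\rho)-\bar{C}(\rho)}{\sqrt{V(\rho)}}<\frac{R-\bar{C}(\rho)}{\sqrt{V(\rho)}}\right),\nonumber
\end{align}
which is an exact equality for any finite system size and any choice of $R$.

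Next I would invoke Theorem~\ref{th2}, which asserts that the left-hand ratio converges in distribution to a standard Gaussian random variable as $N_{r}\to\infty$ under Assumptions~\ref{as1}--\ref{as3}. Since the standard Gaussian CDF $\Xi(\cdot)$ is continuous on $\mathbb{R}$, the Portmanteau theorem guarantees that the CDF of the standardised MI converges pointwise to $\Xi$ at every threshold. Evaluating this limit at the deterministic point $(R-\bar{C}(\rho))/\sqrt{V(\rho)}$ produces the stated closed-form approximation.

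The approximation sign in \eqref{outage} reflects the fact that the CLT is asymptotic: for finite $N_{r}, N_{t}, M, N$ satisfying Assumption~\ref{as1}, the distribution of $C(\rho)$ is only \emph{approximately} Gaussian with mean $\bar{C}(\rho)$ and variance $V(\rho)$, so the equality becomes tight as the dimensions grow at fixed ratios. The numerical validation later in the paper will confirm that this tightness holds already for moderate sizes. The main technical obstacle, namely establishing joint convergence of the mean and variance and the Gaussianity of the fluctuations of $\log\det(\cdot)$ under the doubly-correlated SIM-compounded channel \eqref{equivalent channel}, has already been handled in Theorem~\ref{th2} and its Appendix; the present theorem therefore reduces to a one-line application of the continuous mapping principle and requires no further calculation.
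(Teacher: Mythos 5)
Your argument is correct and is exactly the route the paper takes: the paper's proof is the one-line observation that the result follows directly from the CLT of Theorem~\ref{th2}, which you have simply spelled out via standardisation and continuity of $\Xi(\cdot)$. No differences to report.
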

	\begin{proof}
		The proof is straightforward given the CLT provided by Theorem \ref{th2}.
	\end{proof}

	
	\subsubsection{Problem Formulation and Optimization}
	Based on statistical CSI, the formulation of the minimization of the outage probability is described as
	\begin{subequations}\label{eq:subeqns}
		\begin{align}
			(\mathcal{P})~~&\min_{\bphi_{l},\bpsi_{k}} 	\;			P_{\mathrm{out}}(R) \label{Maximization1} \\
			&\;\quad\;\;\;\;\;\;\;\!\!\!~\!		\bD=\bU^{1}\bPsi^{1}\bU^{2}\bPsi^{2}\cdots\bU^{K}\bPsi^{K},
			\label{Maximization4} \\
			&~	\mathrm{\textcolor{black}{s.t.}}\;\;\;\;\;\;\!\!~\!	\bP=\bPhi^{L}\bW^{L}\cdots\bPhi^{2}\bW^{2}\bPhi^{1}\bW^{1},
			\label{Maximization3} \\
			&\;\quad\;\;\;\;\;\;\;\!\!\!~\!		\bPsi^{k}=\diag(\psi^{k}_{1}, \dots, \psi^{k}_{N}), k \in \mathcal{K},
			\label{Maximization6} \\
			&\;\quad\;\;\;\;\;\;\;\!\!\!~\!		\bPhi^{l}=\diag(\phi^{l}_{1}, \dots, \phi^{l}_{M}), l \in \mathcal{L},
			\label{Maximization5} \\
			&	\;\quad\;\;\;\;\;\;\;\!\!\!~\!		|\psi^{k}_{n}|=1, n \in \mathcal{N}, k \in \mathcal{K}	\label{Maximization8},\\
			&\;\quad\;\;\;\;\;\;\;\!\!\!~\!		|	\phi^{l}_{m}|=1, m \in \mathcal{M}, l \in \mathcal{L}.	\label{Maximization7} \\
		\end{align}
	\end{subequations}
	
	Problem $ (\mathcal{P}) $ is nonconvex because the  objective function is  neither   convex nor concave regarding $\bPhi^{l}$ and $\bPsi^{k}$. Also, we have non-convex constant modulus constraints. In this section, given $\rho$, both $\bar{C}(\rho)$  and $	V(\rho)$ are denoted in terms of the phase shifts of the SIMs. Thus, we have $\bar{C}(\bphi_{l},\bpsi_{k})$  and $	V(\bphi_{l},\bpsi_{k})$. The objective function is written as
	\begin{align}
		f(\bphi_{l},\bpsi_{k}) =\Xi \left(\frac{R-\bar{C}(\bphi_{l},\bpsi_{k})}{\sqrt{V(\bphi_{l},\bpsi_{k})}}\right)\!.
	\end{align}
	
	\subsection{Proposed Algorithm}
	Despite the coupling of the  parameters $e$ and $\delta$ with $\phi^{l}_{m}$ and $\psi^{k}_{n}$, the partial derivatives of $f(\bphi_{l},\bpsi_{k})$  can be derived in closed forms. As a result, the solution to the optimization problem could be achieved by applying the gradient descent approach. Instead of using alternating optimization (AO) to optimize the phase shifts separately in an alternating way, we optimize the transmitter and receiver SIMs  simultaneously. This approach is proposed because the convergence of the AO method may require many iterations that actually increase with the size of the surfaces \cite{Perovic2021}.
	
	The proposed algorithm
is outlined below. Note that the parameters $ \mu_{n}^{i} >0$ for $ i=1,2 $ define  the step of the move  towards the gradient of the objective function. Also, we denote
	\begin{align}
			\Psi_{k}&=\{\bpsi_{k}\in \mathbb{C}^{N \times 1}: |\psi^{k}_{i}|=1, i=1,\ldots, N\},\\
				\Phi_{l}&=\{\bphi_{l}\in \mathbb{C}^{M \times 1}: |\phi^{l}_{i}|=1, i=1,\ldots, M\}.
	\end{align}
	
	In Algorithm \ref{Algoa1}, $ P_{\Phi_{l}}(\cdot) $ and $ P_{\Psi_{k}}(\cdot) $ express the projections onto   $ \Phi_{l} $ and $ \Psi_{k} $, respectively. The reason for the projections is that the newly computed points may be found outside of the feasible set.
	
	\begin{algorithm}[th]
		\caption{Projected Gradient Ascent Method\label{Algoa1}}
		\begin{algorithmic}[1]
			\STATE Input: $\bphi_{l}^{0},\bpsi_{k}^{0},\mu_{n}^{\textcolor{black}{i}}>0$ \textcolor{black}{for $ i=1,2 $}.
			\STATE \textbf{for} $ n=1,2,\ldots \textbf{do} $
			\STATE ~~~~~$\bphi_{l}^{n+1}=P_{\Phi_{l}}(\bphi_{l}^{n}+\mu_{n}^{\textcolor{black}{1}}\nabla_{\bphi_{l}}f(\bphi_{l}^{n},\bpsi_{k}^{n}))$
			\STATE ~~~~~$\bpsi_{k}^{n+1}=P_{\Psi_{k}}(\bpsi_{k}^{n}+\mu_{n}^{\textcolor{black}{2}}\nabla_{\bpsi_{k}}f(\bphi_{l}^{n},\bpsi_{k}^{n}))$
			\STATE \textbf{end for}
		\end{algorithmic}
	\end{algorithm} 	
	%
	
	
	The presentation of the gradients 
	$\nabla_{\bphi_{l}}f(\bphi_{l},\bpsi_{k}) $ and $ \nabla_{\bpsi_{k}}f(\bphi_{l},\bpsi_{k}) $ follows.
	
	\textcolor{black}{From a computational perspective, both the proposed gradient-based method and AO involve similar per-iteration complexity, primarily scaling as  $ \mathcal{O}(L M^{2}+K N^{2})$ 	 due to the computation of gradient terms across all layers and meta-atoms. However, the proposed gradient method jointly updates all SIM phases, whereas AO optimizes them sequentially. In particular, AO requires sequential updates of each SIM, which is less efficient in the presence of strong inter-layer coupling and non-diagonal transformations characteristic of SIM-based architectures. As the number of SIM layers $L$ and $K$ increases, the coupling intensifies, causing AO to require significantly more iterations to reach convergence, as confirmed in Fig. 4. In contrast, the joint gradient descent method updates all variables simultaneously, capturing global interactions and thus converging faster. As a result, the proposed method converges in fewer iterations, as demonstrated in Fig. 4, making it more scalable for systems with large numbers of elements or layers.}
	
	\subsection{Derivation of Gradients}
	The  gradients of $f(\bphi_{l},\bpsi_{k}) $ are given by the following proposition in closed forms.
	\begin{proposition}\label{lemmaGradient}
		The gradients of $f(\bphi_{l},\bpsi_{k}) $ with respect to $ i=\bphi_{l}^{*}, \bpsi_{k}^{*}$  are provided by
		\begin{align}
			\nabla_{i}	f(\bphi_{l},\bpsi_{k}) \!=\!\frac{-	\nabla_{i}	\bar{C}(\bphi_{l},\bpsi_{k})V\!-\!\frac{1}{2}(R\!-\!\bar{C})	\nabla_{i}	V(\bphi_{l},\bpsi_{k})}{V^{\frac{3}{2}}},
		\end{align}
		where $\nabla_{i}	\bar{C}(\bphi_{l},\bpsi_{k})$ and $\nabla_{i}	V(\bphi_{l},\bpsi_{k})$ are obtained by Lemmas \ref{lemmaGradient1} and \ref{lemmaGradient2}.
	\end{proposition}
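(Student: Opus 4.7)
The plan is to treat $f$ as a composition $f = \Xi \circ g$ with scalar-valued argument
\[
g(\bphi_{l},\bpsi_{k}) \defeq \frac{R - \bar{C}(\bphi_{l},\bpsi_{k})}{\sqrt{V(\bphi_{l},\bpsi_{k})}},
\]
and to apply the Wirtinger chain rule for complex gradients with respect to the conjugate variable $i \in \{\bphi_{l}^{*}, \bpsi_{k}^{*}\}$. Since $\Xi'$ is the standard normal density and enters only as a positive scalar multiplier that can be absorbed into the step sizes $\mu_{n}^{1},\mu_{n}^{2}$ of the projected gradient iteration in Algorithm~\ref{Algoa1}, the essential quantity to compute is $\nabla_{i} g$; the monotonicity of $\Xi$ also ensures that the descent direction is preserved.

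To compute $\nabla_{i} g$ I would invoke the quotient rule. Writing $g = (R - \bar{C})\, V^{-1/2}$ and treating $\bar{C}$ and $V$ as scalar functions of the phase vectors, Wirtinger differentiation gives
\[
\nabla_{i} g = -\nabla_{i}\bar{C}\; V^{-1/2} - \tfrac{1}{2}(R-\bar{C})\, V^{-3/2}\, \nabla_{i} V.
\]
Placing both terms over the common denominator $V^{3/2}$ reproduces the closed-form expression in the statement. This step is purely mechanical once $\nabla_{i}\bar{C}$ and $\nabla_{i} V$ are treated as black boxes returned by the subsequent results.

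The substantive content of the proposition is therefore deferred to Lemmas~\ref{lemmaGradient1} and~\ref{lemmaGradient2}, which supply $\nabla_{i}\bar{C}$ and $\nabla_{i} V$. The main obstacle there is that $\bar{C}$ and $V$ depend on $\bphi_{l},\bpsi_{k}$ in two coupled ways: explicitly through the quadratic forms $\bP^{\H}\bR_{\mathrm{T}}\bP$ and $\bD\bR_{\mathrm{R}}\bD^{\H}$ appearing in \eqref{MeanMI}, and implicitly through the fixed-point pair $(\delta(\rho), e(\rho))$ solving \eqref{solution}. I expect an envelope-type argument to show that at the fixed point the partial derivatives of $\bar{C}$ with respect to $\delta$ and $e$ vanish, so only the explicit dependence contributes; a parallel argument should hold for $V$. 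The remaining task then reduces to computing $\partial \bP/\partial \phi_{m}^{l}$ and $\partial \bD/\partial \psi_{n}^{k}$ layer by layer from \eqref{TransmitterSIM} and its receiver counterpart, which is a routine but notation-heavy exercise in matrix calculus. Once those ingredients are in hand, substitution into the chain-rule identity above finishes the proof of Proposition~\ref{lemmaGradient}.
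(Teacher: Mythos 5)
Your proposal matches the paper's proof of this proposition essentially verbatim: Appendix~C likewise writes $f=\Xi(W)$ with $W=(R-\bar{C})/\sqrt{V}$, applies the chain rule to get $d(f)=\tfrac{1}{\sqrt{2\pi}}e^{-W^{2}/2}\,d(W)$, then the quotient rule to get $d(W)=\bigl(-d(\bar{C})V-\tfrac{1}{2}(R-\bar{C})d(V)\bigr)/V^{3/2}$, with the positive Gaussian-density prefactor silently dropped from the stated gradient exactly as you absorb it into the step size. One caveat on your forward-looking remark: the paper's Lemmas~1 and~2 do \emph{not} use an envelope/stationarity argument to discard the implicit dependence through $(\delta,e)$ and $(f,\tilde{f})$; they differentiate the fixed-point equations directly and solve the resulting coupled linear relations for $\partial e/\partial(\bphi^{l*})$, $\partial\tilde{f}/\partial(\bphi^{l*})$, etc., so those implicit terms survive explicitly in the final gradient expressions.
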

	\begin{proof}
		Please see Appendix~\ref{lem1}.	
	\end{proof}
	
	\begin{lemma}\label{lemmaGradient1}
		The gradients of  $\bar{C} $ with respect to $ i=\bphi_{l}^{*}, \bpsi_{k}^{*}$  are
		\begin{align}
			&	\nabla_{\bphi_{l}}	\bar{C}(\bphi_{l},\bpsi_{k})=\frac{N_{r}}{N_{t}}(e \diag(\bA_{l}^{\H}(\bK\bP^{\H}\bR_{\mathrm{T}}))\nn\\
			&+\frac{1}{N_{t}}(1+\rho e)\diag(\bA_{l}^{\H}(\bQ_{t}\bP^{\H}\bR_{\mathrm{T}}))\nn\\
			&+(\tr({\bK} \bar{\bR}_{\mathrm{T}})
			-\frac{\rho}{N_{t}}(1+e)\tr((\bar{\bR}_{\mathrm{T}}\bQ_{t})^{2})-\rho \delta)\frac{\partial e}{\partial{(\bphi^{l*})}}\nn\\
			&-\frac{\rho e}{N_{t}}(1+\rho e)\diag(\bA_{l}^{\H}(\bQ_{t}\bar{\bR}_{\mathrm{T}}\bQ_{t}\bP^{\H}\bR_{\mathrm{T}})))
			,\label{gradient2}\\
			&	\nabla_{\bpsi_{k}}	\bar{C}(\bphi_{l},\bpsi_{k})=\frac{N_{r}}{N_{t}}(\diag(\bC_{k}^{\H}(\bQ_{r}\bD\bR_{R}))\nn\\
			&-\frac{N_{r}}{N_{t}}\rho\diag(\bC_{k}^{\H}(\bQ_{r}\bD {\bR}_{\mathrm{R}}\bD^{\H}\bQ_{r}\bar{\bR}_{\mathrm{R}}))\nn\\
			&-	\frac{N_{r}}{N_{t}^{2}}(1+N_{t}(e+\delta))\rho\tr((\bar{\bR}_{\mathrm{T}}\bQ_{t})^{2})\frac{\partial e}{\partial{(\bpsi^{k*})}}\nn\\
			&-\delta \diag(\bC_{k}^{\H}(\bB\bD\bR_{\mathrm{R}}))),\label{gradient3}
		\end{align}
		with \begin{align}
			\frac{\partial}{\partial{(\bphi^{l*})}}e&= -\frac{\rho}{N_{t} N_{r}}\tr((\bar{\bR}_{\mathrm{R}}\bQ_{r})^{2})(\diag(\bA_{l}^{\H}(\bQ_{t}\bP^{\H}\bR_{\mathrm{T}}))\nn\\
			&-\rho	\frac{N_{r}}{N_{t}}\tr((\bQ_{t}\bar{\bR}_{\mathrm{T}})^{2})   \frac{\partial e}{\partial{(\bphi^{l*})}}\nn\\
			& -\frac{\rho  e}{N_{t} }\diag(\bA_{l}^{\H}(\bQ_{t}\bar{\bR}_{\mathrm{T}} \bQ_{t}\bP^{\H}\bR_{\mathrm{T}})),
		\end{align}
		\begin{align}
			&\frac{\partial}{\partial{(\bpsi^{k*})}}e=\frac{1}{N_{r}}\diag(\bC_{k}^{\H}(\bQ_{r}\bD\bR_{\mathrm{R}}))\nn\\
			&-\frac{\rho \delta}{N_{r}}\diag(\bC_{k}^{\H}(\bQ_{r}\bar{\bR}_{\mathrm{R}}\bQ_{r}\bD\bR_{R}))\nn\\
			&-\frac{\rho }{N_{r}}\tr((\bar{\bR}_{\mathrm{R}}\bQ_{r})^{2})\tr((\bar{\bR}_{\mathrm{T}}\bQ_{t})^{2})	\frac{\partial}{\partial{(\bpsi^{k*})}}e,
		\end{align}
		where 
		\begin{align}
			&{\bK}=(\Id_{N_{t}}+
			\frac{N_{r}}{N_{t}} e(-\rho)\bar{\bR}_{\mathrm{T}})^{-1},\\
			&{\bB}=(\Id_{N_{r}}+\delta(-\rho)\bar{\bR}_{\mathrm{R}})^{-1},\\
			&\bA_{l}(\bX)=\bW^{l}\bPhi^{l-1}\bW^{l-1}\cdots \bPhi^{1}\bW^{1} \bX\bPhi^{L}\nn\\
			&\times\bW^{L}\cdots\bPhi^{l+1}\bW^{l+1},\\
			&	\bC_{k}(\bY)=\bU^{k}\bPsi^{k-1}\bU^{k-1}\cdots \bPsi^{1}\bU^{1}	\bY	\bPsi^{K} \nn\\
			&\times\bU^{K}\cdots \bPsi^{k+1}\bU^{k+1}.	
		\end{align}
	\end{lemma}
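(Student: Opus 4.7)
The plan is to apply the Wirtinger chain rule to $\bar{C}(\rho)$ while treating $\bphi_l,\bphi_l^*$ (and $\bpsi_k,\bpsi_k^*$) as independent variables. The function $\bar{C}$ in \eqref{MeanMI} depends on the phase shifts in two ways: \emph{directly}, through the matrices $\bP$ and $\bD$ inside the two log-determinants, and \emph{indirectly}, through the scalar fixed-point parameters $e(\rho)$ and $\delta(\rho)$, which themselves depend on $\bphi_l,\bpsi_k$ via \eqref{solution}. The total gradient therefore splits as
\begin{align}
\nabla_{\bphi_l^*}\bar{C}=\nabla_{\bphi_l^*}\bar{C}\big|_{e,\delta\text{ fixed}}+\frac{\partial\bar{C}}{\partial e}\,\nabla_{\bphi_l^*}e+\frac{\partial\bar{C}}{\partial\delta}\,\nabla_{\bphi_l^*}\delta, \nonumber
\end{align}
and symmetrically for $\bpsi_k^*$. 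Computing $\partial\bar{C}/\partial e$ and $\partial\bar{C}/\partial\delta$ directly from \eqref{MeanMI} and simplifying via \eqref{solution} supplies the coefficients that multiply $\nabla_{\bphi_l^*}e$ and $\nabla_{\bphi_l^*}\delta$ in this expansion; this is where the $\tr(\bK\bar{\bR}_{\mathrm{T}})$ factor and the $-\rho\delta$ residual in the lemma originate.

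First, I would isolate the layer-$l$ dependence of $\bP$ by the factorization $\bP=(\bPhi^L\bW^L\cdots\bPhi^{l+1}\bW^{l+1})\,\bPhi^l\,(\bW^l\bPhi^{l-1}\cdots\bPhi^1\bW^1)$, so that the operator $\bA_l$ defined in the statement is precisely the sandwich map $\bX\mapsto(\text{inner block})\,\bX\,(\text{outer block})$. Using the standard Jacobian identity $\partial\log\det(\Id+c\,\bP^{\H}\bR_{\mathrm{T}}\bP)/\partial\bPhi^{l,*}=c\,(\text{outer})^{\H}\bR_{\mathrm{T}}\bP(\Id+c\,\bar{\bR}_{\mathrm{T}})^{-1}(\text{inner})^{\H}$ with $c=\tfrac{N_r}{N_t}e$, and then projecting onto the diagonal (because only $\diag(\bPhi^l)=\bphi_l$ is free), yields the leading term $\tfrac{N_r}{N_t}e\,\diag(\bA_l^{\H}(\bK\bP^{\H}\bR_{\mathrm{T}}))$ once $\bK=(\Id+\tfrac{N_r}{N_t}e\,\bar{\bR}_{\mathrm{T}})^{-1}$ is substituted. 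The receive log-determinant contributes only indirectly, since $\bD$ does not depend on $\bphi_l$, while the $-\rho\delta e$ cross-term of \eqref{MeanMI} accounts for the $-\rho\delta$ piece multiplying $\partial e/\partial\bphi_l^*$.

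Next, I would obtain $\nabla_{\bphi_l^*}e$ and $\nabla_{\bphi_l^*}\delta$ by implicit differentiation of the two equations of \eqref{solution} with respect to $\bphi_l^*$; this produces a $2\times 2$ linear system whose off-diagonal entries involve $\tr((\bar{\bR}_{\mathrm{T}}\bQ_t)^2)$ and $\tr((\bar{\bR}_{\mathrm{R}}\bQ_r)^2)$ and whose inhomogeneous part features $\diag(\bA_l^{\H}(\bQ_t\bP^{\H}\bR_{\mathrm{T}}))$ and $\diag(\bA_l^{\H}(\bQ_t\bar{\bR}_{\mathrm{T}}\bQ_t\bP^{\H}\bR_{\mathrm{T}}))$. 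The statement deliberately retains a self-referential form $\partial e/\partial\bphi_l^*=\cdots+\alpha\,\partial e/\partial\bphi_l^*$, corresponding to eliminating $\partial\delta/\partial\bphi_l^*$ from one row of the system and substituting it into the other. The derivation for $\bpsi_k^*$ is entirely analogous and uses the operator $\bC_k$ and the matrix $\bB=(\Id+\delta\,\bar{\bR}_{\mathrm{R}})^{-1}$ on the receive side, producing \eqref{gradient3} together with its companion implicit identity for $\partial e/\partial\bpsi_k^*$.

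The main obstacle is not any single computation but the combined bookkeeping: (i) tracking the exact ordering of the nested $\bPhi^{\bullet}\bW^{\bullet}$ products sitting on either side of $\bPhi^l$ after Hermitian transposition, which dictates whether $\bA_l$ or $\bA_l^{\H}$ appears; (ii) translating matrix Wirtinger derivatives with respect to $\bPhi^{l,*}$ into vector gradients with respect to $\bphi_l^*$, which is the role of the $\diag(\cdot)$ operator, since only the diagonal entries are free parameters with unit modulus; and (iii) resolving the implicit $2\times 2$ coupling between $\partial e$ and $\partial\delta$ without prematurely eliminating either variable. Once the fixed-point identities \eqref{solution} are used to collapse the redundant traces, the expressions in \eqref{gradient2}--\eqref{gradient3} assemble in the stated form by direct substitution.
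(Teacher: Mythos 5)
Your plan matches the paper's own derivation essentially step for step: the paper likewise takes the differential of $\bar{C}$ via $d(\det(\bX))=\det(\bX)\tr(\bX^{-1}d(\bX))$, isolates the layer-$l$ dependence of $\bP$ (resp. $\bD$) to form the sandwich operator $\bA_l$ (resp. $\bC_k$), converts trace-of-diagonal differentials into $\diag(\cdot)$ gradients, and obtains $\partial e/\partial\bphi^{l*}$ and $\partial\delta/\partial\bphi^{l*}$ by implicit differentiation of the fixed-point system \eqref{solution}, keeping the same self-referential form after eliminating $\partial\delta$. No substantive difference in route or in the points you flag as delicate.
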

	\begin{proof}
		Please see Appendix~\ref{lem2}.	
	\end{proof}
	\begin{lemma}\label{lemmaGradient2}
		The gradients of  $V $ with respect to $ i=\bphi_{l}^{*}, \bpsi_{k}^{*}$  are
		\begin{align}
			\nabla_{i} V=\frac{\tilde{\gamma}\nabla_{i}\gamma+\gamma \nabla_{i}\tilde{\gamma}}{1-\gamma\tilde{\gamma}},
		\end{align}
		where
		\begin{align}
			&\nabla_{\bphi_{l}}\gamma=-\frac{2}{N_{t}}  \tr \left(\right.(\bar{\bR}_{\mathrm{R}}\bS )^{2}	\bS\bar{\bR}_{R}\bS\left.\right)\nn\\
			&\times (\frac{1}{N_{t}}\diag(\bA_{l}^{\H}(\tilde{\bS}\bP^{\H}\bR_{T}))+\frac{\rho}{N_{t}} \tr((\bar{\bR}_{T}\tilde{\bS})^{2})\nabla_{\bphi_{l}} \tilde{f}\nn\\
			&+\frac{\rho f}{N_{t}}\diag(\bA_{l}^{\H}(\bP^{\H}\bR_{T}\tilde{\bS}\bar{\bR}_{T}\tilde{\bS}))),\\
			&\nabla_{\bphi_{l}}\tilde{\gamma}=	\frac{2}{N_{t}}\diag(\bA_{l}^{\H}(\tilde{\bS}\bar{\bR}_{T}\tilde{\bS}\bP^{\H} \bR_{T}))\nn\\
			&-\frac{2 \rho}{N_{t}}\tr((\bar{\bR}_{T}	\tilde{\bS})^{2}      )	\nabla_{\bphi_{l}} f\nn\\
			&-\frac{2\rho}{N_{t}} \diag(\bA_{l}^{\H}(\tilde{\bS}\bar{\bR}_{T}\tilde{\bS}\bP^{\H}\bR_{T})),\\
			&\nabla_{\bpsi_{k}}\gamma=\frac{ 2}{N_{t}}\diag(\bC_{k}^{\H}(\bS\bR_{\mathrm{R}}))\nn\\
			&-\frac{ 2f \rho}{N_{t}}\tr((\bar{\bR}_{\mathrm{R}})^{2})\diag(\bC_{k}^{\H}(\bS\bD\bR_{\mathrm{R}}))\nn\\
			&-\frac{ 2 \rho}{N_{t}}\tr((\bar{\bR}_{\mathrm{R}})^{2}\bS\bar{\bR}_{\mathrm{R}}\bS)\nabla_{\bpsi_{k}}\tilde{f},\nn\\
			&\nabla_{\bpsi_{k}} \tilde{\gamma}=-\frac{ 2 \rho }{N_{t}}\tr(\bar{\bR}_{\mathrm{T}}\tilde{\bS}\bar{\bR}_{\mathrm{T}}\tilde{\bS}\bar{\bR}_{\mathrm{T}}\tilde{\bS})\nabla_{\bpsi_{k}}f
		\end{align}
	\end{lemma}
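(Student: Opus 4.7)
The plan is to differentiate the outer logarithm of $V = -\log(1-\gamma\tilde\gamma)$ by the chain rule, reducing the task to the inner gradients of $\gamma$ and $\tilde\gamma$, and then to apply matrix calculus to the explicit closed forms of $\gamma$ and $\tilde\gamma$ produced in Appendix~\ref{th2proof}. First, I would write
\begin{align}
\nabla_i V = \frac{\tilde\gamma\,\nabla_i\gamma + \gamma\,\nabla_i\tilde\gamma}{1-\gamma\tilde\gamma},\nn
\end{align}
which is precisely the envelope formula stated in the lemma and leaves only $\nabla_i\gamma$ and $\nabla_i\tilde\gamma$ to compute for $i = \bphi_l^*,\bpsi_k^*$.

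Next, I would note that $\gamma$ and $\tilde\gamma$ are traces of products built from the deterministic equivalents $\bS,\tilde\bS$ (resolvent-type inverses depending on the fixed-point scalars $e,\delta,f,\tilde f$) together with the effective correlations $\bar\bR_R = \bD\bR_R\bD^{\H}$ and $\bar\bR_T = \bP^{\H}\bR_T\bP$. For $i = \bphi_l^*$, the derivative $\partial\bP/\partial(\phi^l_m)^*$ isolates a single column of the layered product, which is exactly captured by the bilinear operator $\bA_l(\cdot)$ introduced in Lemma~\ref{lemmaGradient1}. Combining this with the standard identities $\partial_x\tr(\bX^{-1}\bY) = -\tr(\bX^{-1}(\partial_x\bX)\bX^{-1}\bY)$ and $\partial_x\tr(\bX\bY) = \tr((\partial_x\bX)\bY)$, and packaging the per-entry gradients via the $\diag(\bA_l^{\H}(\cdot))$ notation, produces every explicit trace appearing in $\nabla_{\bphi_l}\gamma$ and $\nabla_{\bphi_l}\tilde\gamma$. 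The analogous argument with the receiver cascade operator $\bC_k(\cdot)$ delivers $\nabla_{\bpsi_k}\gamma$ and $\nabla_{\bpsi_k}\tilde\gamma$.

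The main obstacle will be handling the implicit dependence of $\bS,\tilde\bS$ (and hence $\gamma,\tilde\gamma$) on the fixed-point scalars $e,\delta,f,\tilde f$, which themselves depend on the phase vectors through \eqref{solution} and through the coupled relations $f = 1/(1+\rho\tilde f),\;\tilde f = 1/(1+\rho f)$. Each naive differentiation therefore yields an explicit matrix-calculus contribution plus an implicit contribution weighted by $\partial e/\partial(\bphi^{l*})$, $\nabla_i f$, or $\nabla_i\tilde f$. I would reuse the self-consistent expressions for $\partial e/\partial(\bphi^{l*})$ and $\partial e/\partial(\bpsi^{k*})$ already derived in Lemma~\ref{lemmaGradient1}, and obtain $\nabla_i f,\nabla_i\tilde f$ analogously by differentiating the two fixed-point relations and solving the resulting $2\times 2$ linear system. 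Substituting these back into the explicit trace terms and collecting the recurring factors $\rho/N_t$, $\tr((\bar\bR_T\tilde\bS)^2)$, and $\tr((\bar\bR_R\bS)^2)$ reproduces precisely the stated formulas. The delicate step is the bookkeeping of which trace contributions are direct and which flow through $e,\delta,f,\tilde f$; once this is organized correctly, the remaining manipulations are routine matrix calculus.
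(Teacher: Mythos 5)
Your proposal follows essentially the same route as the paper's proof: write $\nabla_i V$ via the chain rule on $-\log(1-\gamma\tilde{\gamma})$ to get the envelope formula, differentiate the trace forms of $\gamma$ and $\tilde{\gamma}$ through the resolvents $\bS,\tilde{\bS}$ with standard matrix-calculus identities, package the layered-product derivatives via $\bA_l(\cdot)$ and $\bC_k(\cdot)$, and close the implicit dependence by differentiating the fixed-point equations for $f,\tilde{f}$. One small correction: in the general correlated setting the fixed points are $f=\frac{1}{N_t}\tr(\bar{\bR}_{\mathrm{R}}\bS)$ and $\tilde{f}=\frac{1}{N_t}\tr(\bar{\bR}_{\mathrm{T}}\tilde{\bS})$ (the scalar relations $f=1/(1+\rho\tilde{f})$, $\tilde{f}=1/(1+\rho f)$ you quote hold only in the uncorrelated special case of Proposition~2), but since the coefficients you collect, $\tr((\bar{\bR}_{\mathrm{R}}\bS)^{2})$ and $\tr((\bar{\bR}_{\mathrm{T}}\tilde{\bS})^{2})$, are exactly what differentiating the correct trace-based relations produces, the argument goes through unchanged.
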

	with
	\begin{align}
		&	\nabla_{\bphi_{l}}f=-\frac{\rho}{N_{t}}\tr((\bar{\bR}_{R}\bS)^{2})\nabla_{\bphi_{l}}\tilde{f},\\
		&\nabla_{\bphi_{l}} \tilde{f}=-\frac{\rho}{N_{t}}\tr((\bar{\bR}_{T}\tilde{\bS})^{2})\nabla_{\bphi_{l}}f,\\
		&\nabla_{\bpsi_{k}}f=-\frac{\rho}{N_{t}}\tr((\bar{\bR}_{\mathrm{R}}{\bS})^{2})\nabla_{\bpsi_{k}}\tilde{f},\\
		&\nabla_{\bpsi_{k}}\tilde{f}=-\frac{\rho}{N_{t}}\tr((\bar{\bR}_{\mathrm{T}}\tilde{\bS})^{2})\nabla_{\bpsi_{k}}{f}.
	\end{align}
	\begin{proof}
		Please see Appendix~\ref{lem3}.	
	\end{proof}
	
\textcolor{black}{It is worth noting that although the proposed projected gradient descent algorithm solves a non-convex optimization problem and thus cannot guarantee convergence to a global minimum, it consistently achieves significantly better performance compared to the unoptimized baseline (i.e., random phase configuration). This performance enhancement is verified through numerical results below, which demonstrate lower outage probability and higher ergodic mutual information across various system configurations.}\

\begin{remark}
	`\textcolor{black}{In this work, we assume ideal and instantaneous control of SIM phase shifts. However, in practical systems with large metasurfaces, the delivery of high-resolution phase control information can introduce significant signaling overhead and latency. Recent approaches based on phase shift compression, such as attention-based encoding \cite{Yu2024}, offer promising solutions for reducing control overhead. Extending such techniques to SIM-based architectures, potentially through joint PSI compression across multiple layers, is a valuable direction for future work.	}
\end{remark}
	
	\section{\textcolor{black}{Finite-SNR Trade-offs in SIM-Assisted Systems}}\label{DMT}
	This section  addresses the finite-SNR DMT of SIM-assisted HMIMO systems. Note that initially, DMT was studied in \cite{Zheng2003} to evaluate the trade-off between multiplexing and diversity gain. This DMT was described as the large-SNR DMT. In this work, we focus on the  DMT for finite SNR\cite{Narasimhan2006,Loyka2010}, which sheds light on the low and moderate SNR regions.
	\begin{definition}[\cite{Loyka2010}]\label{def1}
		The multiplexing gain is obtained as
		\begin{align}
			w=\frac{q R}{\EE\{C\}} \approx\frac{q R}{\bar{C}},
		\end{align}
		where $R $ is the data rate and $q=\min(M, N)$
	\end{definition}

	\begin{theorem}\label{Theorem3}
		The  finite-SNR DMT of  SIM-assisted HMIMO systems is given by
		\begin{align}
			d(w,\rho)=\frac{z(d-w)}{w \sqrt{2\pi}}\frac{\exp\left(-\frac{(d-w)^{2}G^{2}(z)}{2 w^{2}}\right)G'(z)}{\Xi\left(\frac{(d-w)G(z)}{ w}\right)},
		\end{align}
		where $G(z)=\frac{\bar{C}}{\sqrt{V(z)}}$ with  $G'(z)=\frac{\mathrm{d}G(z)}{\mathrm{d} z}$ obtained
		as
		\begin{align}
			G'(z)=	\frac{\bar{C}'(z)V(z)-\frac{1}{2}\bar{C}(z)V'(z)}{V^{\frac{3}{2}}(z)}.
		\end{align}
		The derivatives of $\bar{C}(z)$ and ${V}(z)$ with respect to $z$ are provided in Appendix \ref{Theorem3proof}. 
	\end{theorem}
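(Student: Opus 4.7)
The plan is to establish Theorem \ref{Theorem3} by substituting the Gaussian outage approximation of \eqref{outage} into the standard finite-SNR DMT definition of \cite{Loyka2010},
\begin{align}
d(w,\rho)=-\rho\,\frac{\partial \ln P_{\mathrm{out}}(R,\rho)}{\partial \rho},
\end{align}
and then applying the chain rule. Because the approximation reduces $P_{\mathrm{out}}$ to a single composition of a Gaussian CDF with a scalar function of $\rho$, the entire argument collapses into a one-variable calculus exercise once the rate has been eliminated.

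First I would use Definition \ref{def1} to remove the explicit rate dependence by writing $R=w\bar{C}/q$ with $q=\min(M,N)$. Setting $G(z)=\bar{C}(z)/\sqrt{V(z)}$ and noting that $R-\bar{C}=-(q-w)\bar{C}/q$, the outage approximation collapses to
\begin{align}
P_{\mathrm{out}}(\rho)\approx \Xi\!\left(-\frac{(q-w)G(\rho)}{q}\right),
\end{align}
so that all of the SNR-dependence is now concentrated in the scalar $G(\rho)$.

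Next I would differentiate. With $\Xi'(x)=(2\pi)^{-1/2}\exp(-x^{2}/2)$, the chain rule yields
\begin{align}
\frac{\partial P_{\mathrm{out}}}{\partial \rho}=-\frac{(q-w)G'(\rho)}{q\sqrt{2\pi}}\exp\!\left(-\frac{(q-w)^{2}G^{2}(\rho)}{2q^{2}}\right).
\end{align}
Dividing by $P_{\mathrm{out}}$, multiplying by $-\rho$, and relabeling the SNR variable as $z$ produces the expression stated in the theorem. The auxiliary quotient-rule identity
\begin{align}
G'(z)=\frac{\bar{C}'(z)V(z)-\tfrac12 \bar{C}(z)V'(z)}{V^{3/2}(z)}
\end{align}
then follows immediately from $G=\bar{C}/\sqrt{V}$.

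The main obstacle will be producing $\bar{C}'(z)$ and $V'(z)$ in closed form, because both $\bar{C}$ and $V$ depend on $z$ not only directly but also implicitly through the fixed-point pair $(\delta(-\rho),e(-\rho))$ defined by \eqref{solution} and through the companion pair $(\gamma,\tilde{\gamma})$ of Theorem \ref{th2}. My approach is to differentiate each fixed-point system with respect to $z$, which yields a $2\times 2$ linear system in the unknown derivatives $(\partial_{z}\delta,\partial_{z}e)$ and an analogous one for $(\partial_{z}\gamma,\partial_{z}\tilde{\gamma})$, whose coefficients involve only traces of the resolvents $\bQ_{t}$ and $\bQ_{r}$ against the correlation matrices. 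Each linear system can be solved in closed form; substitution into \eqref{MeanMI} and into the variance formula of Theorem \ref{th2} then gives $\bar{C}'(z)$ and $V'(z)$, and assembling them through the quotient rule completes $G'(z)$ and hence the theorem, with the lengthy bookkeeping deferred to Appendix \ref{Theorem3proof}.
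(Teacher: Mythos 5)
Your proposal follows essentially the same route as the paper: substitute the Gaussian outage approximation into the log-derivative definition of the finite-SNR DMT, eliminate $R$ via the multiplexing-gain definition, apply the chain and quotient rules to obtain $G'(z)$, and get $\bar{C}'$ and $V'$ by implicitly differentiating the fixed-point systems for $(e,\delta)$ and $(f,\tilde{f})$. The only differences are cosmetic: you retain Loyka's explicit minus sign and normalize the argument of $\Xi$ by $q$ rather than $w$ (the paper's own statement is notationally inconsistent here, reusing $d$); the substance of the derivation is identical.
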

	\begin{proof}
		Please see Appendix \ref{Theorem3proof}.
	\end{proof}
	
	\subsection{\textcolor{black}{Key Factors Driving SIM Superiority over Single-Layer RIS}}
\textcolor{black}{	The performance gains of the proposed SIM-based holographic MIMO system over conventional RIS-assisted architectures stem from several fundamental differences in design and electromagnetic behavior.}
	
	\textcolor{black}{First, unlike traditional RIS, which typically consists of a single passive layer with independent phase control per element, a SIM is a multilayer metasurface. This structural enhancement enables more complex electromagnetic transformations, resulting in richer scattering behavior and improved wavefront manipulation.
	}
	
	\textcolor{black}{Second, the mathematical representation of SIMs involves a non-diagonal transformation matrix due to strong inter-element coupling and mutual interactions across layers. This is in contrast to RISs, which are typically modeled by diagonal phase-shift matrices, thereby limiting their ability to perform coordinated spatial processing. The non-diagonal structure in SIMs allows for more advanced electromagnetic interference (EMI) shaping and spatial resolution.
	}
	
	\textcolor{black}{Third, the proposed system deploys SIMs at both the transmitter and the receiver, providing bidirectional wavefront control. This dual-SIM configuration significantly enhances the system’s spatial degrees of freedom compared to the single, passively deployed RIS typically used in conventional architectures.
	}
	
	\textcolor{black}{Finally, SIMs offer broader electromagnetic functionality, enabling simultaneous control over amplitude, phase, and angular response. This increased control diversity facilitates more effective energy focusing, reduced path loss, and more robust signal delivery, particularly in complex propagation environments.
	}
	
	\textcolor{black}{These combined advantages explain the superior performance of the SIM-based system in terms of EMI suppression and outage probability, as validated by the theoretical analysis and simulation results presented in this work.}
	\section{Numerical Results}\label{Numerical}
	In this section, we elaborate on the performance of large SIM-assisted MIMO systems in terms of EMI and outage probability through the demonstration of  both theoretical results and Monte Carlo (MC) simulations. According to the simulation setup,  both SIMs, i.e., at the transmitter and the receiver are parallel to the $ x-z $ plane and centered  along the $ y-$axis at a height $ H_{\mathrm{BS}}=5~\mathrm{m} $. Also, the implementation of both SIMs follows the same guidelines. In particular, their parameter values are defined based on \cite{An2023,Papazafeiropoulos2024a}. Hence, the area of each meta-atom  is $ (\lambda/2)^{2}  $ while the spacing  between adjacent meta-atoms  is assumed to be $ \lambda/2 $. The spacing among the metasurfaces is  $ d_{\mathrm{SIM}}= T_{\mathrm{SIM}}/L$, where the thickness of the SIMs is $ T_{\mathrm{SIM}}= R_{\mathrm{SIM}}=5 \lambda $. The path-loss exponent is $b=2.5$, and   the distance between the two SIMs is  $d=200\mathrm{m}$.    The correlation matrices $\bR_{\mathrm{T}}, \bR_{\mathrm{R}}$ are obtained according to \eqref{t} and \eqref{c}. \textcolor{black}{Moreover, in our implementation, the step sizes $\mu_{n_i}=0.01$ 		are set to constant values tuned empirically to ensure convergence. While fixed step sizes provide stable performance in our experiments, adaptive step size strategies could be employed to further improve convergence speed and robustness, especially in more complex or time-varying environments.}  The    system bandwidth and carrier frequency  are $ 20~\mathrm{MHz} $ and $ 2~\mathrm{GHz} $, respectively. The  transmit power available is  $\rho=20\mathrm{dBm}$, and the receiver sensitivity is set to $\sigma^{2}=-110\mathrm{dBm}$ Furthermore, we assume  $ N_{t}= N_{r}=32$, $ M=N=200 $, and  $ K=L=4 $ unless otherwise stated.

	In Fig. \ref{fig2}, we show the EMI versus the number  of surfaces $ K $ while changing  $L$, which is the  number of surfaces of the transmitter SIM. We notice that the  EMI increases with both  $ K $ and $L $. However, an increase further approximately to 5 surfaces does not provide an extra increase to the EMI. In the same figure, we have depicted the performance  corresponding to instantaneous CSI in the case of $K=4$ surfaces. This case performs better than the statistical CSI case since the optimization takes place at each  coherence interval instead of at each of several coherence intervals. However, in the case of statistical CSI, a significant overhead is saved due to lower complexity and less frequent optimization. Moreover, MC simulations verify the analytical results since the corresponding lines coincide.	
	
	In Fig. \ref{fig3}, we depict the EMI versus the number of   elements  $ M $ of each metasurface of the transmitter  while varying the number of elements  $N$ of each metasurface of the receiver. The EMI increases with both $M $ and $N$. Moreover, we have simulated the EMI for random phase shifts without any optimization when $N=200 $ elements, which shows that the optimization is  necessary for better performance since the EMI is higher when it has been optimized.		
	
	In Fig. \ref{fig4}, we illustrate the	convergence of the gradient ascent algorithm, where the two SIMs are optimized simultaneously against the AO method. In the latter method, the two SIMs are optimized independently in an alternating way. The algorithm finishes  when the iterations are greater than $100$ or the difference  between the two last iterations is less than $10^{-5}$. It is shown that the proposed algorithm saturates faster to the optimum EMI, while the AO method is slower due to higher overhead. Furthermore, we consider two cases, i.e., $ M = N = 100$  and $ M = N = 200$ for $L=5$. We observe that more elements  require more iterations to converge since the complexity increases with the number of elements.

	Fig. \ref{fig5} depicts the outage probability versus the transmit power. As can be seen, the outage probability decreases with the transmit power and the transmit rate $R$. Specifically, we have plotted three lines, which correspond to $R=2,3,4$, respectively. In parallel, we have plotted MC simulations, which corroborate the analytical results since the corresponding lines coincide.
	\begin{figure}%
		\centering
		\includegraphics[width=0.9\linewidth]{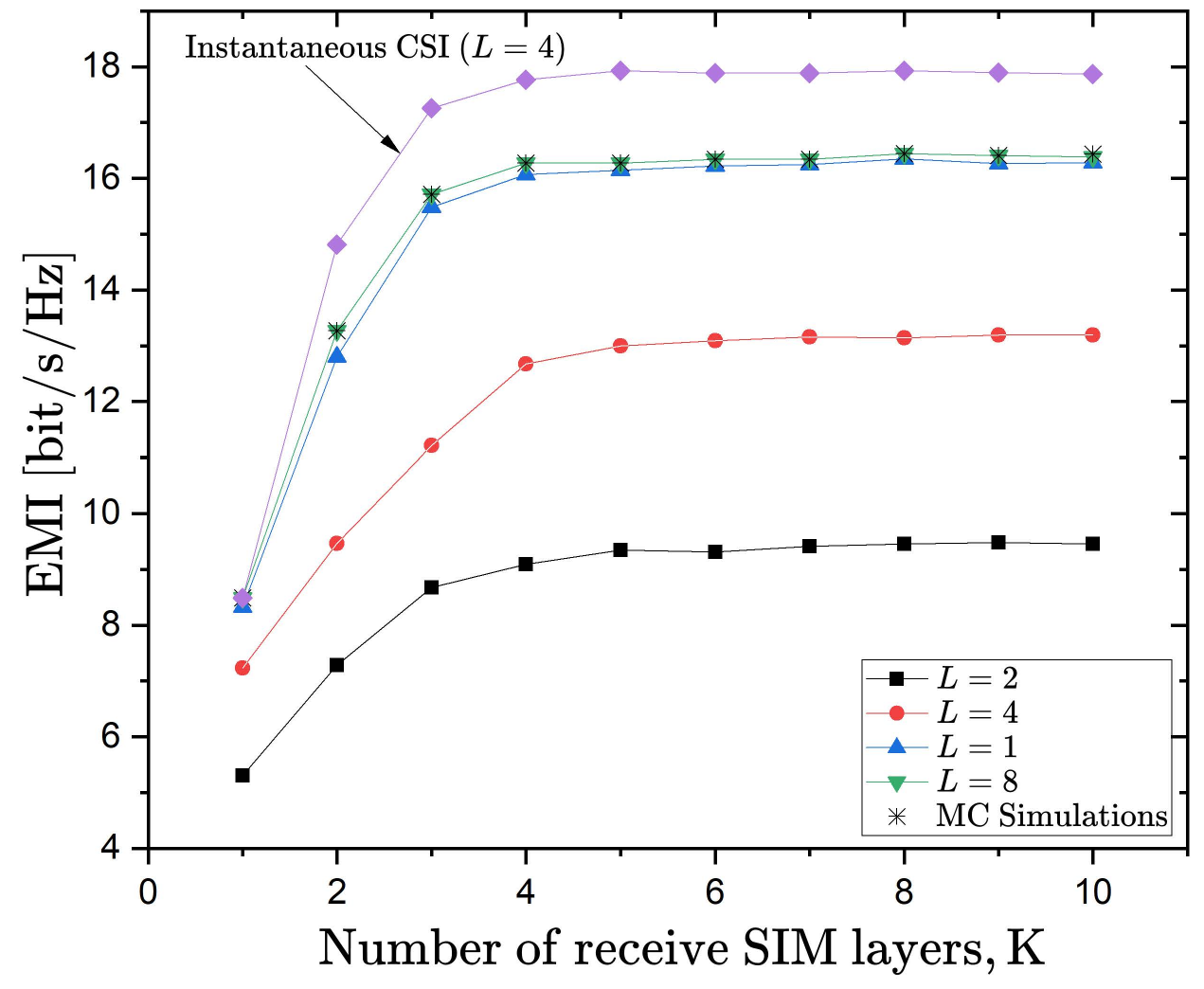}
		\caption{EMI with respect to the number of SIM layers of the receiver SIM.}
		\label{fig2}
	\end{figure}
	
	Fig. \ref{fig6} shows the outage probability with respect to the transmit power for a varying number of metasurfaces $L$ at the receiver (dotted lines) and a number of elements per metasurface (solid lines). For a given rate threshold, an increase of both parameters decreases the outage probability, i.e., the performance is improved. \textcolor{black}{From Fig.  \ref{fig6}, we observe that increasing the number of SIM elements (per layer) enhances both diversity and multiplexing performance. This is attributed to the improved spatial resolution and effective aperture gain. Furthermore, as the number of SIM layers increases, the system experiences greater angular diversity and richer scattering, which results in a more favorable DMT curve, especially in the high-reliability (low multiplexing) regime. However, due to inherent mutual coupling and practical implementation limits, the DMT performance eventually saturates. These trends are further illustrated in Fig. 9, which shows how the finite-SNR DMT varies with different SIM configurations.}
	
	Fig. \ref{fig7} shows the outage probability  with respect to the rate threshold  for a varying number of metasurfaces $L$ at the receiver (solid lines) and a number of elements per metasurface (dotted lines). It is shown that under a given transmit power, the outage probability approaches $1$ as the rate threshold increases.
	
	Fig. \ref{fig8}  depicts the diversity and multiplexing trade-off for different SNRs. It is shown that the diversity gain decreases with increasing multiplexing gain, while it increases with SNR. The figure also validates the expression of the finite SNR DMT provided by Theorem \ref{Theorem3}.
	\begin{figure}%
		\centering
		\includegraphics[width=0.9\linewidth]{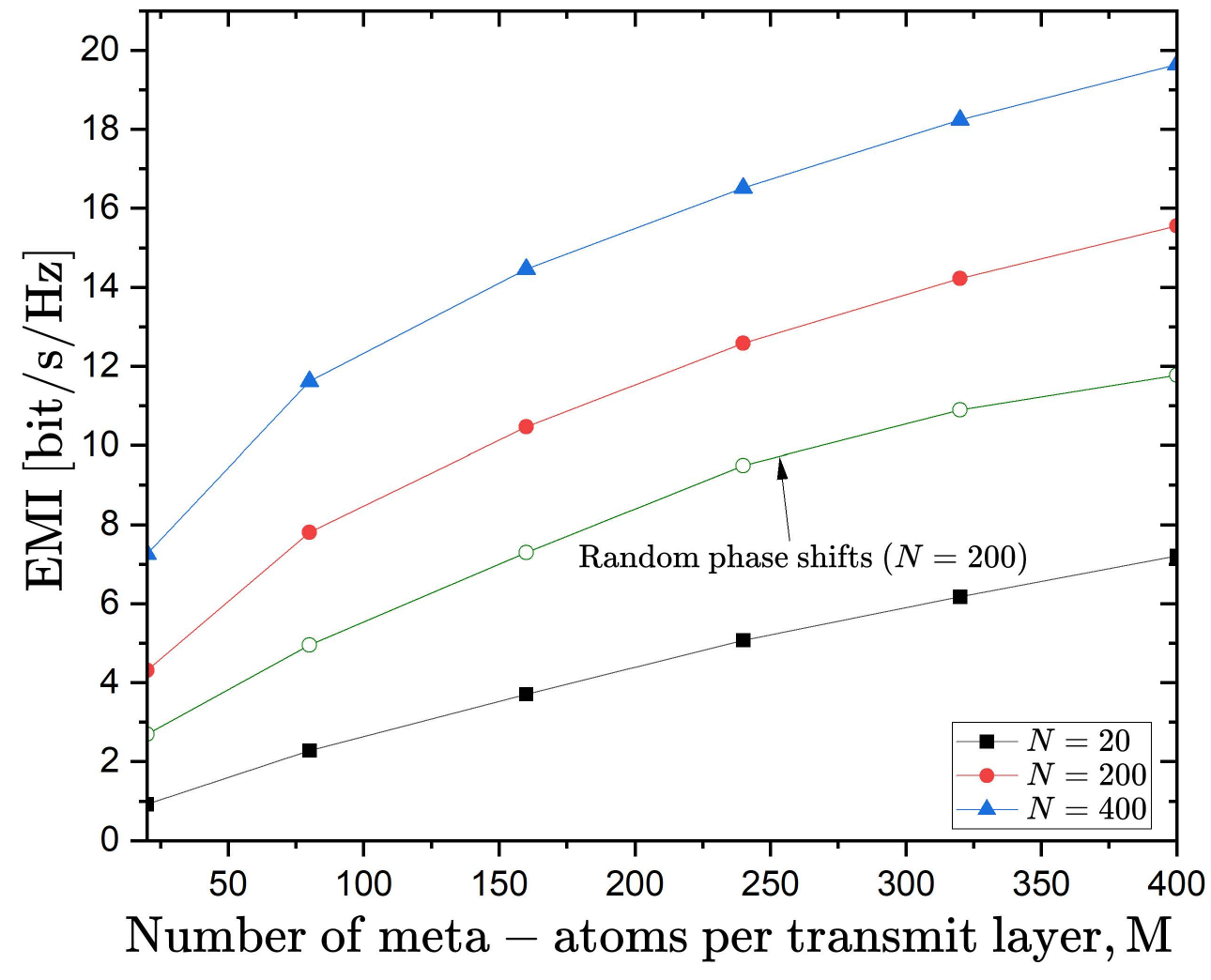}
		\caption{EMI   with respect to the number of metasurface elements of the transmitter SIM.}
		\label{fig3}
	\end{figure}
	
	\begin{figure}%
		\centering
		\includegraphics[width=0.9\linewidth]{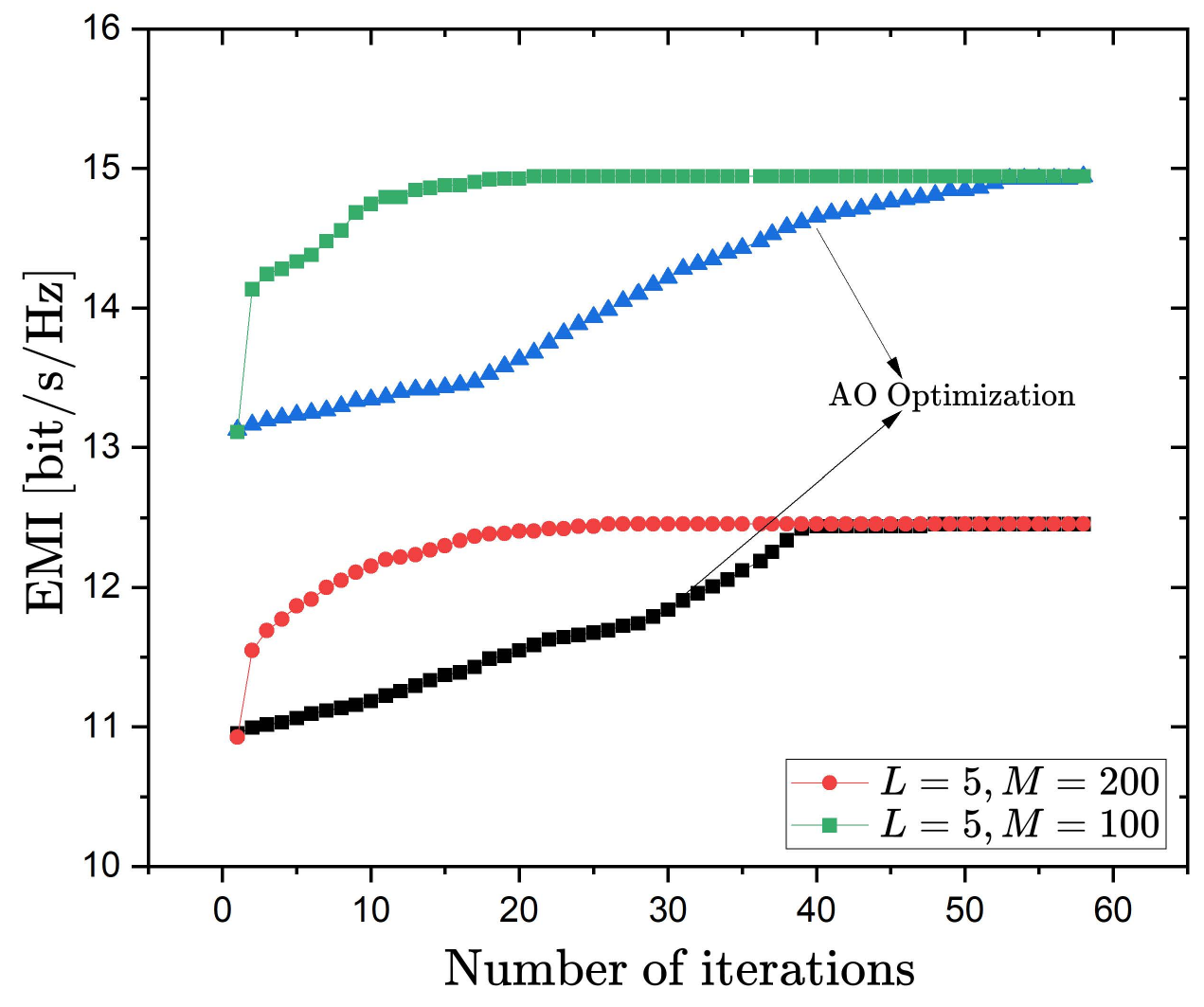}
		\caption{EMI  with respect to the number of iterations.}
		\label{fig4}
	\end{figure}

	\begin{figure}%
		\centering
		\includegraphics[width=0.95\linewidth]{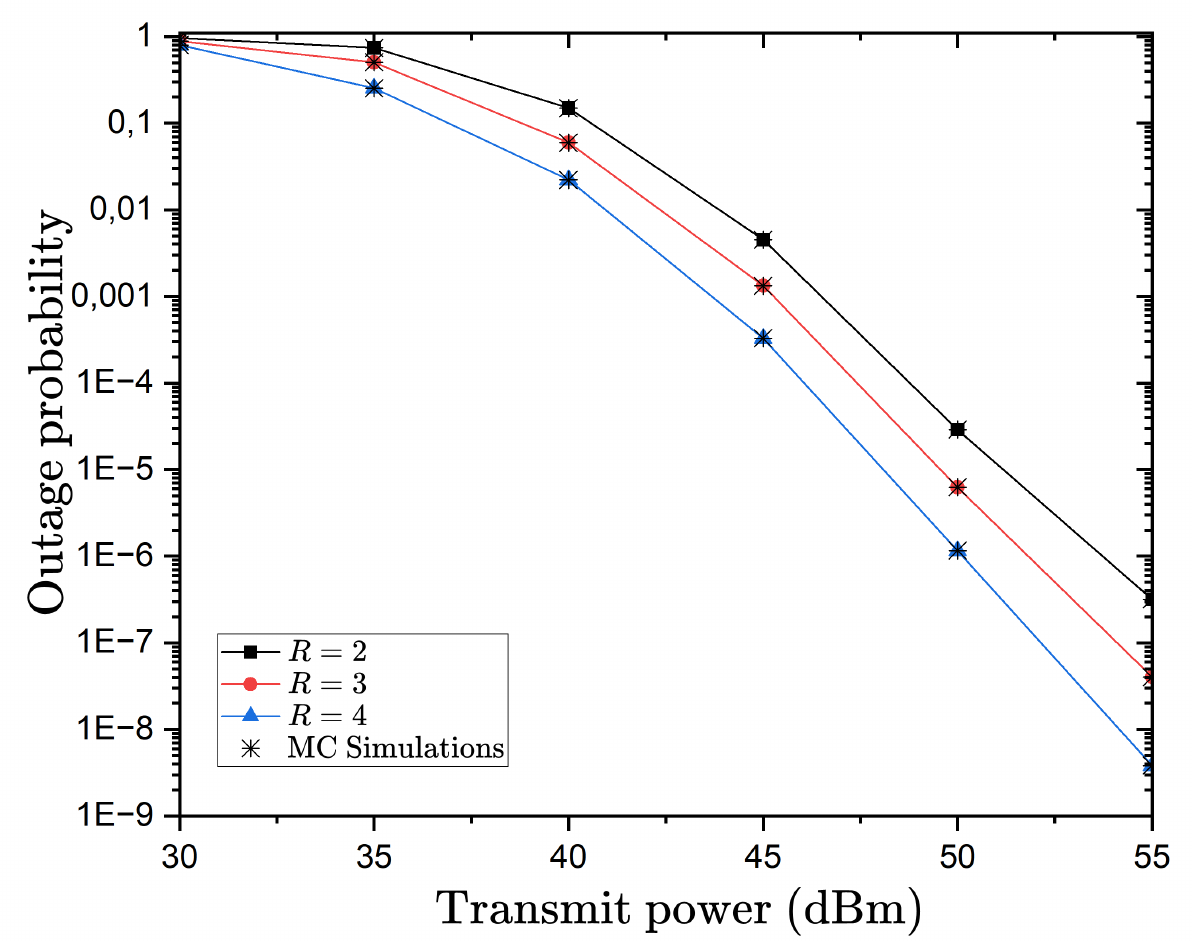}
		\caption{Outage probability  with respect to the transmit power.}
		\label{fig5}
	\end{figure}	
	
	\begin{figure}%
		\centering
		\includegraphics[width=0.95\linewidth]{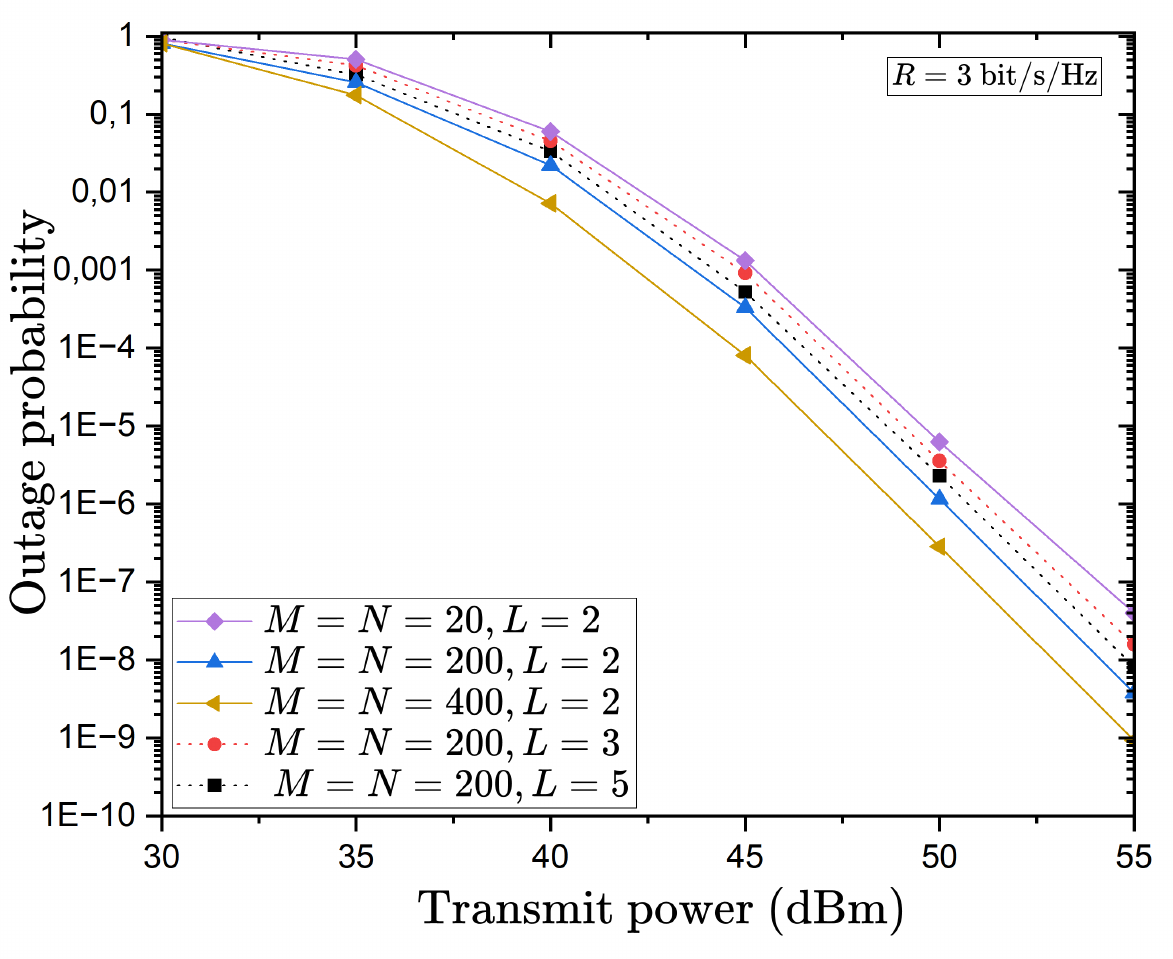}
		\caption{Outage probability  with respect to the transmit power.}
		\label{fig6}
	\end{figure}	
	
	\begin{figure}%
		\centering
		\includegraphics[width=0.95\linewidth]{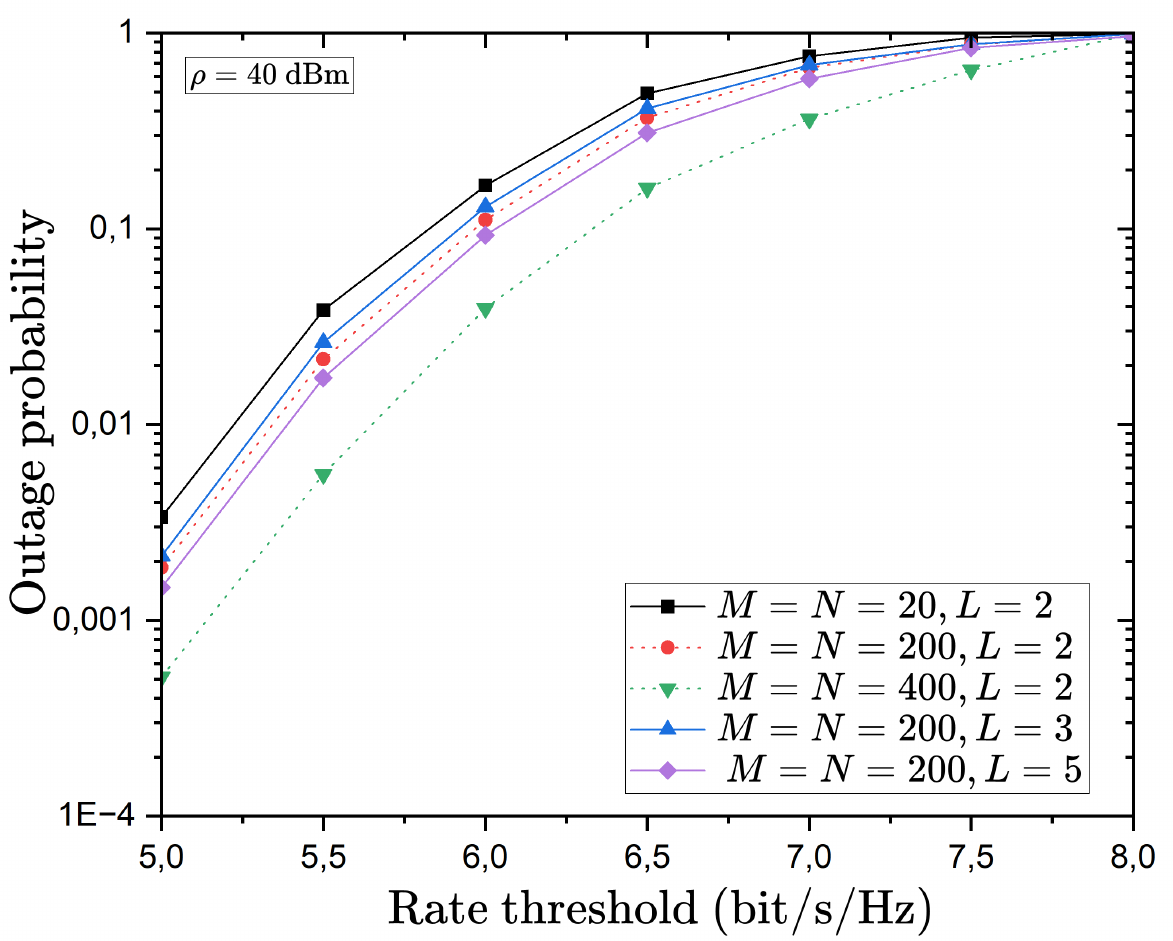}
		\caption{Outage probability with respect to the rate threshold.}
		\label{fig7}
	\end{figure}	
	
	\begin{figure}%
		\centering
	\includegraphics[width=0.95\linewidth]{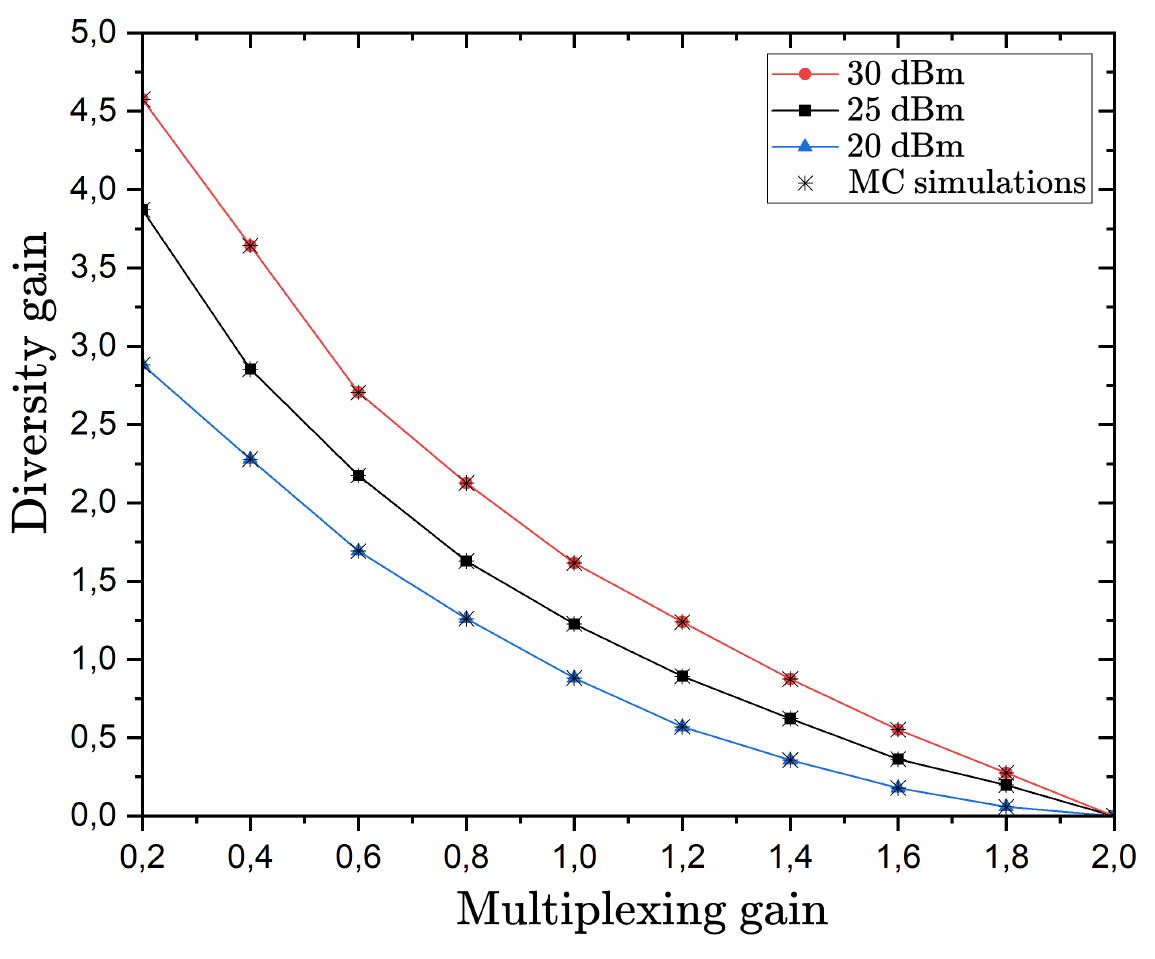}
		\caption{Diversity and multiplexing trade-off varying power.}
		\label{fig8}
	\end{figure}

	\textcolor{black}{Fig. \ref{fig9} shows how the finite-SNR DMT behavior evolves with different SIM configurations (varying number of layers $L$ and elements $N$). The key trends visible in the figure are that  more SIM layers (e.g., $L = 6$) lead to higher diversity gains, especially in the low multiplexing regime. Also, a larger aperture size (e.g., $M = 128$) improves both diversity and multiplexing trade-offs, pushing it upward. The diversity gain decreases with increasing multiplexing gain, but this trade-off is mitigated as $L$ and $M$ increase.}
	
		\begin{figure}%
		\centering
		\includegraphics[width=0.95\linewidth]{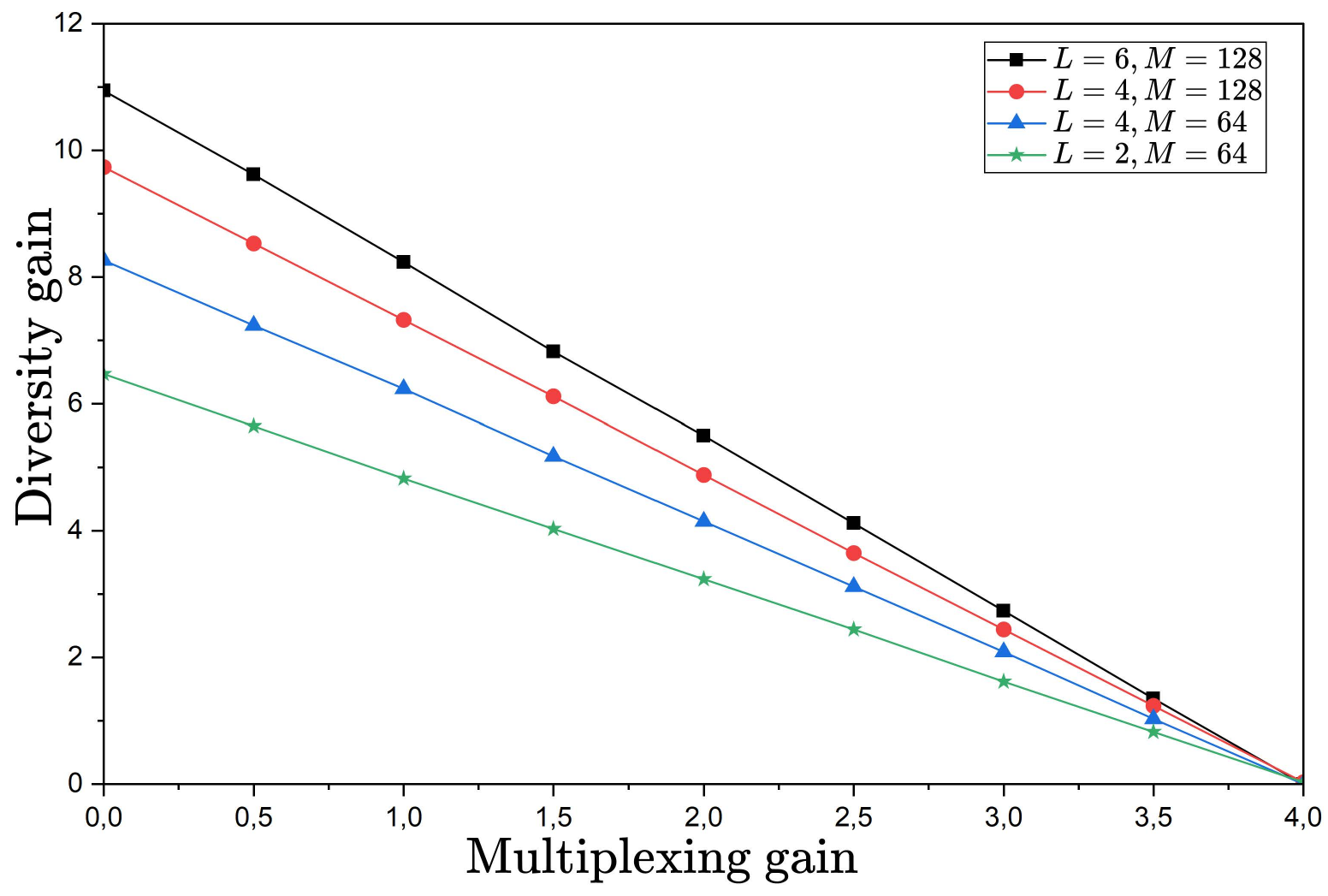}
		\caption{Diversity and multiplexing trade-off varying $L$ and $M$.}
		\label{fig9}
	\end{figure}	
	
	\section{Conclusion} \label{Conclusion} 
	In this paper, we derived the EMI, the outage probability, and finite SNR DMT for large SIM-assisted MIMO systems based on statistical CSI. In particular, by exploiting RMT results, the EMI was derived. Next, use of the CLT resulted in an approximation of the outage probability in closed form. Furthermore, the phase shifts were optimized by proposing a gradient descent algorithm for the minimization of the outage probability. Moreover, a closed-form finite SNR DMT was obtained to provide a trade-off between the throughput and the outage probability. Finally, analytical and simulation results were depicted and showed that the sizes of the SIMs improve the performance.	
	
	\begin{appendices}
		\section{Proof of Theorem~\ref{th2}}\label{th2proof}	
		Taking into account \cite[Th. 2]{Hachem2008}, we conclude the convergence of ${C}(\rho)-\bar{C}(\rho)$ to a zero-mean Gaussian process with asymptotic variance given by
		\begin{align}
			V(\rho)=\mathrm{Var}({C}(\rho))\xrightarrow[N_{t}\to \infty]{\mathcal{P}}-\log(1-\gamma\tilde{\gamma}),
		\end{align}
		where $\gamma=\frac{1}{N_{t}}\tr(\bar{\bR}_{\mathrm{R}}\bS)^{2}$, $\tilde{\gamma}=\frac{1}{N_{t}}\tr(\bar{\bR}_{\mathrm{T}}\tilde{\bS})^{2} $, $\bS=(\Id_{N_{r}}+\rho\tilde{f}\bar{\bR}_{\mathrm{R}})^{-1}$, $\tilde{\bS}=(\Id_{N_{t}}+\rho{f}\bar{\bR}_{\mathrm{T}})^{-1}$
		with $(f,\tilde{f})$ being the unique positive solution of the 
		system of equations $f=\frac{1}{N_{t}}\tr(\bar{\bR}_{\mathrm{R}}\bS)$ and $\tilde{f}=\frac{1}{N_{t}}\tr(\bar{\bR}_{R}\tilde{\bS})$. 
		
		\section{Proof of Proposition~\ref{lemmaGradient}}\label{lem1}
		The derivation of the gradient starts by focusing on its differential. Specifically, we have
		\begin{align}
			d(f(\bphi_{l},\bpsi_{k}))=\frac{\exp\left(-\frac{W^{2}(\bphi_{l},\bpsi_{k})}{2}\right) d(W(\bphi_{l},\bpsi_{k}))}{\sqrt{2 \pi}},
		\end{align}
		where $W(\bphi_{l},\bpsi_{k})=\frac{R-\bar{C}(\bphi_{l},\bpsi_{k})}{\sqrt{V(\bphi_{l},\bpsi_{k})}}$,
		\begin{align}
			d(W) =\frac{-d(\bar{C})V-\frac{1}{2}(R-\bar{C})d(V)}{V^{\frac{3}{2}}}.
		\end{align}
		By taking the corresponding derivative, we obtain the gradient expression, where $\nabla_{i}	\bar{C}(\bphi_{l},\bpsi_{k})$ and $\nabla_{i}	V(\bphi_{l},\bpsi_{k})$ are obtained by Lemmas \ref{lemmaGradient1} and \ref{lemmaGradient2}.
		\section{Proof of Lemma~\ref{lemmaGradient1}}\label{lem2}
		From \eqref{MeanMI}, we have 
		\begin{align}
			&d(\bar{C})	=\frac{N_{r}}{N_{t}}\tr({\bK} (d(e)	 \bar{\bR}_{\mathrm{T}}+e (d( \bP^{\H})\bR_{\mathrm{T}}\bP+ \bP^{\H}\bR_{\mathrm{T}}d(\bP))))
			\nn\\
			&+\tr({\bB}(d(\delta)\bar{\bR}_{\mathrm{R}}+\delta(d(\bD)\bR_{\mathrm{R}}\bD^{
				\H}+\bD\bR_{\mathrm{R}}d(\bD^{
				\H}))))\nn\\
			&	-\rho (d(\delta) e+\delta d (e))=d(I_{1})+d(I_{2})-d(I_{3}),\label{differential}
		\end{align}
		where we have used that $ d(\det(\bX))=\det(\bX)\tr(\bX^{-1}d(\bX)) $. Also, we have denoted ${\bK}=(\Id_{N_{t}}+
		\frac{N_{r}}{N_{t}} e(-\rho)\bar{\bR}_{\mathrm{T}})^{-1}$ and ${\bB}=(\Id_{N_{r}}+\delta(-\rho)\bar{\bR}_{\mathrm{R}})^{-1}$ with $ \bar{\bR}_{\mathrm{T}}=\bP^{\H}\bR_{\mathrm{T}}\bP$ and $\bar{\bR}_{\mathrm{R}}=\bD\bR_{\mathrm{R}}\bD^{
			\H}$.
		
		For the differential $d(\bP)$, we have
		\begin{align}
			d(\bP)&=\bPhi^{L}\bW^{L}\cdots\bPhi^{l+1}\bW^{l+1}d(\bPhi^{l})\bW^{l}\bPhi^{l-1}\nn\\
			&\times\bW^{l-1}\cdots\bPhi^{1}\bW^{1}. \label{differentialPhi4}
		\end{align}
		A similar expression to \eqref{differentialPhi4} holds for the differential $d(\bD)$. The differential of 	$e$		 is obtained as
		\begin{align}
			d(e)\!=\!\frac{1}{N_{r}}\!\tr \left(d(\bD)\bR_{\mathrm{R}}\bD^{
				\H}\bQ_{r}\!+\!\bD\bR_{\mathrm{R}}d(\bD^{
				\H})\bQ_{r}\!+\!\bar{\bR}_{\mathrm{R}}d(\bQ_{r})\right)\!.\label{de}
		\end{align}
		In  particular, $d(\bQ_{r})$ concerns the differential of an inverse matrix. Generally, for an inverse matrix $\bT$, its differential is obtained as \cite[Eq. 40]{Petersen2012}  $ d(\bT)=-\bT d(\bT^{-1})\bT $. Hence, we have
		\begin{align}
			d(\bQ_{r})=-\bQ_{r} d(\bQ_{r}^{-1})\bQ_{r},\label{dqr}
		\end{align}		
		where 
		\begin{align}
			\!\!\!d(\bQ_{r}^{-1})=-\rho(	d(\delta)\bar{\bR}_{\mathrm{R}}+\delta d(\bD)\bR_{\mathrm{R}}\bD^{
				\H}+\delta\bD\bR_{\mathrm{R}}(d\bD^{
				\H})).\label{dqr1}
		\end{align}			
		Insertion of \eqref{dqr1}	into \eqref{dqr}	gives	$	d(\bQ_{r})$.
	
	Similarly, the  differential of 	$\delta$		 is derived as
	\begin{align}
		d(\delta)&=\frac{1}{N_{t}}\tr\left(\right. d(\bP^{\H})\bR_{\mathrm{T}}\bP \bQ_{t}+\bP^{\H}\bR_{\mathrm{T}}d(\bP) \bQ_{t}\nn\\
		&+\bar{\bR}_{\mathrm{T}} d( \bQ_{t})\left.\right),
	\end{align}			
	where 
	\begin{align}
		d(\bQ_{t})=-\bQ_{r} d(\bQ_{t}^{-1})\bQ_{t},\label{dqt}
	\end{align}			
	with
	\begin{align}
		d(\bQ_{r}^{-1})&=-\rho	\frac{N_{r}}{N_{t}}(	d(e)\bar{\bR}_{\mathrm{T}}+	e d(\bP^{\H})\bR_{\mathrm{T}}\bP+	\nn\\
		&+e\bP^{\H}\bR_{\mathrm{T}}d(\bP)).
	\end{align}		
	
	Thus,   $d(\delta) $ becomes
	\begin{align}
		d(\delta)&=\frac{1}{N_{t}}\tr\left(\!\right. d(\bP^{\H})\bR_{\mathrm{T}}\bP \bQ_{t}+\bP^{\H}\bR_{\mathrm{T}}d(\bP) \bQ_{t}\nn\\
		&+\bar{\bR}_{\mathrm{T}} d(\bQ_{t})
		\!\left.\right).\label{dd2}
	\end{align}

	Substitution of $	d(\bP)$  into the first trace of \eqref{differential} gives
	\begin{align}
		&d(I_{1})=\tr({\bK} (d(e)	\bar{\bR}_{\mathrm{T}}+e (d( \bP^{\H})\bR_{\mathrm{T}}\bP+ \bP^{\H}\bR_{\mathrm{T}}d(\bP))))
		\nn\\
		&=d(e)\tr(	 {\bK}\bar{\bR}_{\mathrm{T}})+e (\bA_{l}^{\H}(\bK\bP^{\H}\bR_{\mathrm{T}}) d((\bPhi^{l})^{\H})\nn\\
		&+ 	\bA_{l}(\bK\bP^{\H}\bR_{\mathrm{T}})   d(\bPhi^{l}))),\label{firstterm}
	\end{align}			
	where 
	\begin{align}
		\bA_{l}(\bK\bP^{\H}\bR_{\mathrm{T}})&=\bW^{l}\bPhi^{l-1}\bW^{l-1}\cdots \bPhi^{1}\bW^{1} \bK\bP^{\H}\bR_{\mathrm{T}}\bPhi^{L}\nn\\
		&\times\bW^{L}\cdots\bPhi^{l+1}\bW^{l+1}.
	\end{align}			
	
	The derivative of the first term of \eqref{differential} can be written as
	\begin{align}
		\frac{\partial I_{1}}{\partial{(\bphi^{l*})}}	=\tr({\bK} \bar{\bR}_{\mathrm{T}})
		\frac{\partial e}{\partial{(\bphi^{l*})}}+e \diag(\bA_{l}^{\H}(\bK\bP^{\H}\bR_{\mathrm{T}})),\label{partiale}
	\end{align}			
	where we have used that $ \tr\left(\bA\bB\right)=(\left(\diag\left(\mathbf{A}\right)\right)^{\T}d(\diag(\bB))$ for any matrices $ \bA,\bB $ with $ \bB $ being a diagonal matrix. 
	
	From \eqref{de}, the partial derivative $\frac{\partial e}{\partial{(\bphi^{l*})}}$ in \eqref{partiale} is obtained as
	\begin{align}
		&\frac{\partial e}{\partial{(\bphi^{l*})}}=\frac{1}{N_{r}}\tr(\bar{\bR}_{\mathrm{R}}\frac{\partial \bQ_{r}}{\partial{(\bphi^{l*})}})\\
		&=-\frac{\rho}{N_{r}}\tr((\bar{\bR}_{\mathrm{R}}\bQ_{r})^{2})\frac{\partial \delta}{\partial{(\bphi^{l*})}}
		)\\
		&=-\frac{\rho}{N_{t} N_{r}}\tr((\bar{\bR}_{\mathrm{R}}\bQ_{r})^{2})(\diag(\bA_{l}^{\H}(\bQ_{t}\bP^{\H}\bR_{\mathrm{T}}))+ \frac{\partial \bQ_{t}}{\partial{(\bphi^{l*})}})\\
		&=-\frac{\rho}{N_{t} N_{r}}\tr((\bar{\bR}_{\mathrm{R}}\bQ_{r})^{2})(\diag(\bA_{l}^{\H}(\bQ_{t}\bP^{\H}\bR_{\mathrm{T}}))\nn\\
		&-\rho	\frac{N_{r}}{N_{t}}\tr((\bQ_{t}\bar{\bR}_{\mathrm{T}})^{2})   \frac{\partial e}{\partial{(\bphi^{l*})}}\nn\\
		& -\frac{\rho N_{r} e}{N_{t} N_{r}}\diag(\bA_{l}^{\H}(\bQ_{t}\bar{\bR}_{\mathrm{T}} \bQ_{t}\bP^{\H}\bR_{\mathrm{T}}))\label{de4}
		).
	\end{align}

	In a similar way,   calculation of the derivative with respect to  $\bphi^{l*}$ of the second trace of \eqref{differential} yields
	\begin{align}
		\frac{\partial}{\partial{(\bphi^{l*})}}	I_{2}=\tr({\bB}\bar{\bR}_{\mathrm{R}})\frac{\partial}{\partial{(\bphi^{l*})}}\delta,
	\end{align}
	where
	\begin{align}
		\frac{\partial}{\partial{(\bphi^{l*})}}\delta&=\frac{1}{N_{t}}\diag(\bA_{l}^{\H}(\bQ_{t}\bP^{\H}\bR_{\mathrm{T}}))\nn\\
		&-\frac{\rho}{N_{t}}\tr((\bar{\bR}_{\mathrm{T}}\bQ_{t})^{2}\frac{\partial}{\partial{(\bphi^{l*})}}e)\nn\\
		&-\frac{\rho e}{N_{t}}\diag(\bA_{l}^{\H}(\bQ_{t}\bar{\bR}_{\mathrm{T}}\bQ_{t}\bP^{\H}\bR_{\mathrm{T}})) \label{de5}.
	\end{align}

	The  derivative of the last term $I_{3}$ is straightforward since \eqref{de4} and \eqref{de5} have already been derived. Thus, $	\nabla_{\bphi_{l}}\bar{C}$ is obtained. In a similar way, $	\nabla_{\bpsi_{k}}\bar{C}$ is derived.
	
	\section{Proof of Lemma~\ref{lemmaGradient2}}\label{lem3}
	Regarding the variance, its differential is given by
	\begin{align}
		d(V)=\frac{d(\gamma)\tilde{\gamma}+\gamma d(\tilde{\gamma})}{1-\gamma\tilde{\gamma}}.
	\end{align}
	We have
	\begin{align}
		&d(\gamma)=\frac{2}{N_{t}}\tr \left(\right.\bar{\bR}_{\mathrm{R}}\bS\left(\right.(d(\bD)\bR_{\mathrm{R}}\bD^{
			\H}\bS+\bD\bR_{\mathrm{R}}d(\bD^{
			\H})\bS\nn\\
		&+\bar{\bR}_{\mathrm{R}}d(\bS)\left.\right)\left.\right),\\
		&d(\tilde{\gamma})=\frac{2}{N_{t}}\tr\left(\right.\bar{\bR}_{\mathrm{T}}\tilde{\bS}\left(\right.\bP^{\H}\bR_{\mathrm{T}}d(\bP)\tilde{\bS}\nn\\
		&+d(\bP^{\H})\bR_{\mathrm{T}}\bP\tilde{\bS}+\bar{\bR}_{\mathrm{T}}d(\tilde{\bS})\left.\right)\left.\right).
	\end{align}
	The differentials of $\bD$  and $\bP$ have already been derived. Hence, we focus on the differentials of $\bS$  and $\tilde{\bS}$. We have
	\begin{align}
		&\!\!d(\bS)\!=\!\rho \bS( d(\tilde{f})\bD\bR_{\mathrm{R}}\bD^{
			\H}\!+\!\tilde{f} d(\bD)\bR_{\mathrm{R}}\bD^{
			\H}\!+\! \bD\bR_{\mathrm{R}}d(\bD^{
			\H}))\bS,\\
		&\!\!d(\tilde{\bS})\!=\!-\rho\tilde{\bS}( d({f} )\bar{\bR}_{\mathrm{T}}\!+\!{f}d(\bP^{\H})\bR_{\mathrm{T}}\bP\!+\! \bP^{\H}\bR_{\mathrm{T}})d(\bP))\tilde{\bS}
	\end{align}
	with
	\begin{align}
		&d(f)=\frac{1}{N_{t}}\tr(d(\bD)\bR_{\mathrm{R}}\bD^{
			\H}\bS+\bD\bR_{\mathrm{R}}d(\bD^{
			\H})\bS+\bar{\bR}_{\mathrm{R}}d(\bS)),\\
		&d(\tilde{f})=\frac{1}{N_{t}}\tr\left(\!\!\right.d(\bP^{\H})\bR_{\mathrm{T}}\bP\tilde{\bS}+\bP^{\H} \bR_{\mathrm{T}}d(\bR_{\mathrm{T}})\tilde{\bS}+\bar{\bR}_{\mathrm{T}}d(\tilde{\bS})\!\!\left.\right)\!.	 
	\end{align}
	By starting with the derivative of $\gamma$, we have
	\begin{align}
		&\frac{\partial}{\partial{(\bphi^{l*})}}\gamma=\frac{2}{N_{t}}  \tr \left(\right.\!\!(\bar{\bR}_{\mathrm{R}}\bS )^{2}	\frac{\partial}{\partial{(\bphi^{l*})}}\bS\!\!\left.\right),
	\end{align}
	where 
	\begin{align}
		\frac{\partial}{\partial{(\bphi^{l*})}}\bS=-\rho\bS\bar{\bR}_{R}\bS\frac{\partial}{\partial{(\bphi^{l*})}}\tilde{f}.
	\end{align}
	Next, we derive  $\frac{\partial}{\partial{(\bphi^{l*})}}\gamma$ after some algebraic manipulations. Specifically, we have
	\begin{align}
		&\frac{\partial}{\partial{(\bphi^{l*})}}\gamma=-\frac{2}{N_{t}}  \tr \left(\right.(\bar{\bR}_{\mathrm{R}}\bS )^{2}	\bS\bar{\bR}_{R}\bS\left.\right)\nn\\
		&\times (\frac{1}{N_{t}}\diag(\bA_{l}^{\H}(\tilde{\bS}\bP^{\H}\bR_{T}))+\frac{\rho}{N_{t}} \tr((\bar{\bR}_{T}\tilde{\bS})^{2})\frac{\partial \tilde{f}}{\partial{(\bphi^{l*})}}\nn\\
		&+\frac{\rho \tilde{f}}{N_{t}}\diag(\bA_{l}^{\H}(\bP^{\H}\bR_{T}\tilde{\bS}\bar{\bR}_{T}\tilde{\bS}))).\end{align}
	where
	\begin{align}
		\frac{\partial \tilde{f}}{\partial{(\bphi^{l*})}}=-\frac{\rho}{N_{t}}\tr((\bar{\bR}_{T}\tilde{\bS})^{2})\frac{\partial {f}}{\partial{(\bphi^{l*})}}.
	\end{align}
	In the case of $\tilde{\gamma}$, we have
	\begin{align}
		d(\tilde{\gamma})&=\frac{2}{N_{t}}\tr\left(\right.\!\!\bar{\bR}_{T}\tilde{\bS}\left(\right.d(\bP^{\H})\bR_{T}\bP\tilde{\bS}+\bP^{\H}\bR_{T}d(\bP)\tilde{\bS}\nn\\
		&+\bar{\bR}_{T} d(\tilde{\bS})\!\!\left.\right)\!\left.\negthinspace\right).
	\end{align}
	Its derivative is written as
	\begin{align}
		\frac{\partial}{\partial{(\bphi^{l*})}}\tilde{\gamma}&=\frac{2}{N_{t}}\diag(\bA_{l}^{\H}(\tilde{\bS}\bar{\bR}_{T}\tilde{\bS}\bP^{\H} \bR_{T}))\nn\\
		&+\frac{2}{N_{t}}\tr(\bar{\bR}_{T}	\frac{\partial \tilde{\bS}}{\partial{(\bphi^{l*})}})\nn\\
		&=\frac{2}{N_{t}}\diag(\bA_{l}^{\H}(\tilde{\bS}\bar{\bR}_{T}\tilde{\bS}\bP^{\H} \bR_{T}))\nn\\
		&-\frac{2 \rho}{N_{t}}\tr((\bar{\bR}_{T}	\tilde{\bS})^{2}      )	\frac{\partial f}{\partial{(\bphi^{l*})}}\nn\\
		&-\frac{2\rho}{N_{t}}\tr(\bR_{T}\bP\tilde{\bS}\bar{\bR}_{T}\tilde{\bS}	\frac{\partial \bP^{\H}}{\partial{(\bphi^{l*})}})\nn\\
		&=\frac{2}{N_{t}}\diag(\bA_{l}^{\H}(\tilde{\bS}\bar{\bR}_{T}\tilde{\bS}\bP^{\H} \bR_{T}))\nn\\
		&-\frac{2 \rho}{N_{t}}\tr((\bar{\bR}_{T}	\tilde{\bS})^{2}      )	\frac{\partial f}{\partial{(\bphi^{l*})}}\nn\\
		&-\frac{2\rho}{N_{t}} \diag(\bA_{l}^{\H}(\tilde{\bS}\bar{\bR}_{T}\tilde{\bS}\bP^{\H}\bR_{T})),
	\end{align}
	where
	\begin{align}
		\frac{\partial f}{\partial{(\bphi^{l*})}}=-\frac{\rho}{N_{t}}\tr((\bar{\bR}_{R}\bS)^{2})\frac{\partial \tilde{f}}{\partial{(\bphi^{l*})}}.
	\end{align}
	Hence, we have obtained the derivatives $	\frac{\partial}{\partial{(\bphi^{l*})}}{\gamma}$ and $ 	\frac{\partial}{\partial{(\bphi^{l*})}}\tilde{\gamma}$. Similarly, we derive $	\frac{\partial}{\partial{(\bpsi^{k*})}}{\gamma}$ and $ 	\frac{\partial}{\partial{(\bpsi^{k*})}}\tilde{\gamma}$. We omit the details to avoid repetition.
	\section{Proof of Theorem~\ref{Theorem3}}\label{Theorem3proof}
	The definition of finite-SNR DMT, provided by Definition \eqref{def1}, gives
	\begin{align}
		d(w,\rho)&=\frac{\partial \log\left(	P_{\mathrm{out}}(\frac{(d-w)G(z)}{w})\right)}{\partial{\log \rho}}\nn\\
		&=\frac{z(d-w)}{w \sqrt{2\pi}}\frac{\exp\left(-\frac{(d-w)^{2}G^{2}(z)}{2 w^{2}}\right)G'(z)}{\Xi\left(\frac{(d-w)G(z)}{ w}\right)},
	\end{align}
	where the computation of $ G'(z)$ requires the derivatives of $\bar{C}(z)$ and ${V}(z)$ with respect to $z$, which are obtained  by calculating $\bar{C}'(z)$ and ${V}'(z)$ after applying the chain rule. We have
	\begin{align}
		\bar{C}'(\rho)&=\frac{N_{r}}{N_{t}} e'(-\rho)\tr (\bK
		\bP^{\H}\bR_{\mathrm{T}}\bP)\nn\\
		&+\delta'(-\rho)\tr(\bB\bD\bR_{\mathrm{R}}\bD^{
			\H})+\delta (-\rho) e(-\rho)\nn\\
		&+\rho(\delta' (-\rho) e(-\rho)+\delta(-\rho) e'(-\rho)),
	\end{align}
	where $e'(-\rho)$ and $\delta'(-\rho)$ are obtained from \eqref{solution} 
	by the following system of equations.
	\begin{align}
		e'(-\rho)&=-\frac{1}{N_{r}}(\tr(\bar{\bR}_{R}\bQ_{r}^{2})+\rho\delta'(-\rho))\tr (\bar{\bR}_{R}\bQ_{r})^{2}\nn\\
		&+\frac{1}{N_{r}}\delta(-\rho)\tr ((\bar{\bR}_{R}\bQ_{r})^{2})\\
		\delta'(-\rho)	&=-\frac{1}{N_{t}}(\tr(\bar{\bR}_{T}\bQ_{t}^{2})+	\frac{N_{r}\rho}{N_{t}}	e'(-\rho))\tr(\left(\bar{\bR}_{T} \bQ_{t}\right)^{2})\nn\\
		&+	\frac{N_{r}}{N_{t}}	e(-\rho))\tr(\left(\bar{\bR}_{T} \bQ_{t}\right))^{2}.
	\end{align}
	Regarding the derivative of the variance, we have
	\begin{align}
		V'(\rho)=\frac{\gamma'\tilde{\gamma}+\gamma\tilde{\gamma}'}{1-\gamma\tilde{\gamma}},
	\end{align}
	where 
	\begin{align}
		\gamma'&=\frac{2}{N_{t}}\tr(\bar{\bR}_{R}\bS\bar{\bR}_{R}{\bS}'),\\
		\tilde{\gamma}'&=\frac{2}{N_{t}}\tr(\bar{\bR}_{\mathrm{T}}\tilde{\bS}\bar{\bR}_{\mathrm{T}}\tilde{\bS}').
	\end{align}
	with
	\textcolor{black}{\begin{align}
		\bS'&=-\bS(\tilde{f}\bar{\bR}_{R}+\rho\tilde{f}'\bar{\bR}_{R})\bS,\\
		\tilde{\bS}'&=-\tilde{\bS}({f}\bar{\bR}_{T}+\rho{f}'\bar{\bR}_{T})\tilde{\bS}.
	\end{align}	}
	We have $(f',\tilde{f}')$ forms the unique positive solution of the following
	system of equations 
	\begin{align}
		f'&=-\frac{1}{N_{t}}\tr((\tilde{\bR}_{\mathrm{R}}{\bS})^{2})(\tilde{f}+\rho \tilde{f}'),\\
		\tilde{f}'&=-\frac{1}{N_{t}}\tr((\tilde{\bR}_{\mathrm{T}}\tilde{\bS})^{2})(f+\rho f').
	\end{align}
	
\end{appendices}

\bibliographystyle{IEEEtran}

\bibliography{IEEEabrv,bibl}
\end{document}